\DeclareMathOperator*{\E}{\mathbb{E}}
\let\Pr\relax
\DeclareMathOperator*{\Pr}{\mathbb{P}}
\DeclareMathOperator*{\argmin}{arg\,min}
\theoremstyle{plain}
\newtheorem{theorem}{Theorem}[section]
\newtheorem{lemma}[theorem]{Lemma}
\newtheorem{corollary}[theorem]{Corollary}
\theoremstyle{definition}
\theoremstyle{remark}
\newcommand{\eps}{\varepsilon}
\newcommand{\N}{\mathbb{N}}
\newcommand{\LIS}{\text{LIS}}
\newcommand{\GraphModel}{CLV-B}
\title{(Optimal) Online Bipartite Matching with Degree Information}
\author{
Anders Aamand \\
MIT \\
\texttt{aamand@mit.edu}
\And
Justin Y.\ Chen \\
MIT \\
\texttt{justc@mit.edu}
\And
Piotr Indyk \\
MIT \\
\texttt{indyk@mit.edu}
  % Affiliation \\
  % Address \\
  % \texttt{email} \\
  % \And
  % Coauthor \\
  % Affiliation \\
  % Address \\
  % \texttt{email} \\
}
\begin{document}

\maketitle

\begin{abstract}
We propose a model for online graph problems where algorithms are given access to an oracle that predicts (e.g., based on modeling assumptions or on past data) the degrees of nodes in the graph. Within this model, we study the classic problem of online bipartite matching, and a natural greedy matching algorithm called MinPredictedDegree, which uses predictions of the degrees of offline nodes. For the bipartite version of a stochastic graph model due to Chung, Lu, and Vu where the expected values of the offline degrees are known and used as predictions,  we show that MinPredictedDegree stochastically dominates {\em any} other online algorithm, i.e., it is optimal for graphs drawn from this model. 
Since the ``symmetric'' version of the model, where all online nodes are identical, is a special case of the well-studied ``known i.i.d.\ model'', it follows that the competitive ratio of MinPredictedDegree on such inputs is at least 0.7299. 
For the special case of graphs with power law degree distributions, we show that MinPredictedDegree frequently produces matchings almost as large as the true maximum matching on such graphs.
We complement these results with an extensive empirical evaluation showing that MinPredictedDegree compares favorably to state-of-the-art online algorithms for online matching. 

%plaintext version (for submission)
\iffalse
We propose a model for online graph problems where algorithms are given access to an oracle that predicts (e.g., based on past data) the degrees of nodes in the graph. Within this model, we study the classic problem of online bipartite matching, and a natural greedy matching algorithm called MinPredictedDegree, which uses predictions of the degrees of offline nodes. For the bipartite version of a stochastic graph model due to Chung, Lu, and Vu where the expected values of the offline degrees are known and used as predictions, we show that MinPredictedDegree stochastically dominates any other online algorithm, i.e., it is optimal for graphs drawn from this model. Since the "symmetric" version of the model, where all online nodes are identical, is a special case of the well-studied "known i.i.d. model", it follows that the competitive ratio of MinPredictedDegree on such inputs is at least 0.7299. For the special case of graphs with power law degree distributions, we show that MinPredictedDegree frequently produces matchings almost as large as the true maximum matching on such graphs. We complement these results with an extensive empirical evaluation showing that MinPredictedDegree compares favorably to state-of-the-art online algorithms for online matching.  
\fi
\end{abstract}

\section{Introduction}
\label{sec-intro}
Online algorithms are algorithms that process their inputs “on the fly”, making irrevocable decisions based only on the data seen so far.  Since they do not make any assumptions about the future, they are versatile and work even for adversarial inputs. Unfortunately, by focusing on the worst case, their performance in “typical” cases can be sub-optimal. As a result there has been a large body of research studying various relaxations of the worst-case model, where some extra information about the inputs, or the distribution they are selected from, is available~\cite{Uncertainty}.  

Motivated by the developments in machine learning, over the last few years, many papers have studied online algorithms with predictions~\cite{mitzenmacher2020algorithms}. Such algorithms are equipped with a predictor that, when invoked, provides an (imperfect) prediction of some features of the future part of the input, which is then used by the algorithm to improve its performance. The specific information provided by such predictors is problem-dependent. For graph problems studied in this paper, predictions could include:  the list of edges incident to a given vertex~\cite{kumar2019semi}, the weight of an edge adjacent to a given node in an optimal solution~\cite{antoniadis2020secretary}, or vertex weights that guide a proportional allocation scheme~\cite{lavastida2020learnable}.

In this paper we focus on online graph problems, and propose a model where an algorithm is equipped with a “degree predictor”, i.e., an oracle that, given any vertex, predicts the degree of that vertex in the full graph (containing yet-unseen edges).  This predictor has multiple appealing features. First it is simple, natural, and easy to interpret.  Second, it is useful: vertex degree information is employed in many heuristic and approximation algorithms for graph optimization, for problems such as maximum independent set~\cite{halldorsson1997greed} or maximum matching~~\cite{tinhofer1984probabilistic}.  Third (as demonstrated in Section~\ref{sec-experiments}) such predictors can be easily obtained.  Finally, degree prediction is closely related to the problem of estimating the frequencies of elements in a data set\footnote{The degree of a node is simply the number of times the node appears in the union of all edges.}, and frequency predictors have been already shown to improve the performance of algorithms for multiple data analysis problems~\cite{hsu2019learning,jiang2020learning,eden2021learning,du2021putting}.   
% emphasize that this predictor could be easily learned

The specific graph problem studied in this paper is {\em online bipartite matching}, where we are given a bipartite graph $G = (U \cup V, E)$, and the goal is to find a maximum matching in $G$. In the online setting, the set $U$  is known beforehand, while the vertices in $V$ arrive online one by one. When a new vertex $v$ arrives, the edges in $G$ adjacent to $v$ are provided as well. Online maximum bipartite matching is a classic question studied in the online algorithms literature, with many applications~\cite{mehta2013online}. It is known that a randomized online greedy algorithm, called Ranking, computes a matching of size at least $1-1/e \approx 0.632$ times the optimum~\cite{karp1990optimal},  and that this bound is tight in the worst-case. A large body of work studied various relaxations of the problem, obtained by assuming that  vertex arrivals are random~\cite{goel2008online} or that the graph itself is randomly generated from a given ``known i.i.d. model''~\cite{feldman2009online}. In this paper we extend the basic online model by assuming access to a predictor that, given any “offline” vertex $u \in U$, returns an estimate of its degree. (Note that the degree of any vertex in $V$ is known immediately upon its arrival.)

\paragraph{Our results}  We study the following simple greedy algorithm for bipartite matching: upon the arrival of a vertex $v$, if the set of neighbors $N(v)$ of $v$ in $G$ contains any yet-unmatched vertex, the algorithm selects $u \in N(v)$ of minimum predicted degree in $G$ and adds the edge $(u,v)$ to the matching. This algorithm, which we call {\em MinPredictedDegree (MPD)}, is essentially identical\footnote{The main differences are syntactic: the algorithm of \cite{borodin2019conceptually} computes the degrees based on the given ``type graph'' (see Section~\ref{sec-prelim}), while in this paper we allow arbitrary predictors.}  to the algorithm proposed in~\cite{borodin2019conceptually} which in turn was inspired by the offline matching algorithm called MinGreedy~\cite{ tinhofer1984probabilistic}. The intuition is that vertices with higher degree will have more chances to be matched in the future.

Our main contributions are as follows. First, following in a long line of work on the average-case analysis of matching algorithms initiated by~\cite{karp1981maximum}, we analyze MPD under a natural random bipartite graph model we refer to as {\em \GraphModel{}}, a bipartite version of the Chung-Lu-Vu random graph model~\cite{chung2004spectra}.
A \GraphModel{} random graph is parameterized by $n=|U|$, $m=|V|$, and two weight vectors $\mathbf{p} = \{p_i\}_{i=1}^n\in[0,1]^n$ and $\mathbf{q} = \{q_i\}_{i=1}^m\in[0,1]^m$. For any $u_i \in U$ and $v_j \in V$, the edge $\{u_i,v_j\}$ appears in the graph with probability $p_iq_j$ and these events are mutually independent. 
This model corresponds to the setting where consumers pick their edges with probabilities proportional to the vector $\mathbf{p}$ which describes the relative distribution over the producers. 
%\Anders{I no longer think that this is true when we introduce $q$}. 

Many natural families of random graphs can be described in the \GraphModel{} model. Of particular interest is the case when $q=(1,\dots,1)$, corresponding to the consumers picking their edges i.i.d.; we will refer to this case as the {\em symmetric \GraphModel{} model}. The symmetric version can be viewed as a special case of the well-studied known i.i.d. model of \cite{feldman2009online}. If we further let $\mathbf{p}=(p,\dots,p)$, then the \GraphModel{} graph is an Erd\H{o}s-Rényi random bipartite graph with edge probability $p$.

\paragraph{Theoretical Results}
For the \GraphModel{} model and the MPD algorithm which uses the expected degrees as predictions, we make the following theoretical contributions:

\begin{itemize}
\item We show that MPD stochastically dominates {\em any} other online algorithm, i.e., it is optimal for graphs drawn from the \GraphModel{} model (Section~\ref{sec-opt}). Specifically, we show that for any degree distribution, any algorithm $A$ and any integer $t$, the probability that $A$ produces matching of size at least $t$ is upper bounded by the analogous probability for MPD. Since symmetric \GraphModel{} is a special case of the known i.i.d. model, it follows that the competitive ratio of MPD for this model is at least equal to the best competitive ratio of any algorithm that works for the known i.i.d. model. By the result of~\cite{brubach2016new}, this ratio is at least 0.7299. 
%\Anders{But we are no longer in the known i.i.d. model. Maybe we should just specify that we get this bound when $q=(1,\dots,1)$.}
\item We analyze MPD on symmetric \GraphModel{} model with power law degree distribution (Section~\ref{sec-analysis}). Our theoretical predictions demonstrate that the competitive ratio achieved by our algorithm on such graphs is very high. In particular, for several different power law distributions, it exceeds $0.99$.
\item We also analyze MPD on Erd\H{o}s-Rényi bipartite random graphs where all edges appear with the same probability (Appendix~\ref{appendix-uniform}). In particular, the competitive ratio of the algorithm on such graphs is at least $0.831$. Since in this case all expected degrees are equal, the prediction oracle is of no help. Thus, we conjecture that this is the worst distribution for MPD among all distributions in the  \GraphModel{} model class.  
\item Finally, we observe that  the competitive ratio of MPD is $1/2$ for  {\em worst case} graphs, and that this bound is tight. In addition, we show that the worst-case competitive ratio of \emph{any} algorithm with access to the offline degrees is at most $1-1/e$, implying that degree predictions do not help in the worst-case though they prove to be useful in the random model as well as in practice. See Appendix~\ref{appendix-worst-case-bound} for details.
\end{itemize}

\paragraph{Experiments}
We complement our theoretical studies with an extensive empirical evaluation of MPD for multiple random graph models and real graph benchmarks in Section~\ref{sec-experiments}. Our experiments show that, on most benchmarks, MPD has the best performance among about a dozen state-of-the-art online algorithms, even when compared to algorithms that use much more information about the input.
These experimental results demonstrate that MPD performs well beyond the average-case instances we study theoretically.
%Furthermore, we observe that the performance of MPD remains strong even when the prediction  quality degrades (up to a point).  

\paragraph{Prediction Error}
For our theoretical results on the \GraphModel{} graphs, MPD is given only the expected (as opposed to the actual) degrees. Although this models the uncertainty in the input, it is natural to ask how MPD performs when even the expected degrees are mispredicted. 
To this end, in Appendix~\ref{appendix-noisy}, we suppose that the offline nodes are prioritized in an arbitrary order $\pi'$ which may be different from the order $\pi$ obtained by sorting the nodes according to their expected degrees. Letting $\Delta$ be the minimum number of offline nodes that needs to be deleted such that $\pi$ and $\pi'$ induce the same order on the remaining nodes, we prove that using a noisy degree predictor which induces $\pi'$ instead of $\pi$ can shrink the size of the matching produced by MPD by at most $\Delta$. We note that the number of mispredicted nodes is an upper bound on $\Delta$, but in general $\Delta$ could be much smaller.

%\textcolor{red}{To this end, in Appendix~\ref{appendix-noisy}, we prove that if the offline nodes are prioritized in some order other than sorted in terms of their expected degrees, then the expected size of the matching produced by MPD can shrink by at most the minimum number of offline nodes which would need to be removed for the rest to be in sorted order (this quantity is at most the number of mispredicted nodes).}

Importantly, we also note that the empirical performance of MPD shows its resilience to prediction error. Our experiments on real graphs use predictors which are noisy and which degrade over time but still find large matchings. Furthermore, on synthetic Zipfian data, we experiment with artifically adding noise and show a gradual degradation of MPD's performance as error increases. 

\section{Preliminaries}
\label{sec-prelim}

\paragraph{\GraphModel{}  model}  \GraphModel{} is the bipartite version of the Chung-Lu-Vu model used in prior work~\cite{chung2004spectra,meka2009matrix}. 
Given vectors $\mathbf{p} = \{p_i\}_{i=1}^n$ and $\mathbf{q} = \{q_j\}_{j=1}^m$, the edge $\{u_i, v_j\}$ appears in the graph independently with probability $p_iq_j$. From the vectors $\mathbf{p}$ and $\mathbf{q}$, we obtain the vector of offline expected degrees $\mathbf{d} = \{d_i\}_{i=1}^n=\{p_i\cdot \|\mathbf{q}\|_1\}_{i=1}^n$.
For our theoretical results within this model, our algorithm MinPredictedDegree uses the degree predictor which returns the \emph{expected} degree for each offline node: $\sigma(u_i) = d_i$  (see Appendix~\ref{appendix-noisy} for extension to noisy predictors).
The particular case of symmetric \GraphModel{} where $q=(1,\dots,1)$ corresponds to the case where consumers (online) pick their edges i.i.d.\ over producers (offline) and MPD has knowledge of the average preferences over producers.
\iffalse
\color{red}
Given a vector of offline expected degrees $\mathbf{d} = \{d_i\}_{i=1}^n$, the edge $\{u_i, v_i\}$ appears in the graph independently w.p. $d_i/m$.
    Our algorithm MinPredictedDegree uses the degree predictor which returns the \emph{expected} degree for each offline node: $\sigma(u_i) = d_i$.
    This model corresponds to the case where consumers (online) pick their edges i.i.d.\ over producers (offline) and MinPredictedDegree has knowledge of the average preferences over producers.
\fi

 %\vspace{-0.15 in}   
\paragraph{Known i.i.d. model} In the known i.i.d. model of \cite{feldman2009online}, algorithms are given access to a {\em type graph} $G=(U \cup V,E)$ and a distribution $\mathcal{P}: V \rightarrow [0,1]$.
The nodes in $V$ and their incident edges represent ``types'' of online nodes.
An input instance $\hat{G}=(U \cup \hat{V},\hat{E})$ is formed by picking $m$ online nodes i.i.d. from $V$ according to the probabilities described by $\mathcal{P}$.
Note that the symmetric \GraphModel{} model defined earlier is a special case of this model.
In our experiments, the degree predictions are given by the expected degrees of the offline nodes. %(i.e. the degree of the offline nodes in the type graph rescaled by $|\hat{V}|/|V|$).

\section{Related Work}
\label{sec-related}
Online bipartite matching and its generalizations have been investigated extensively. The survey~\cite{mehta2013online} and the recent paper \cite{borodin2020experimental} provide excellent overviews of this area. The state of the art competitive ratios are $1-1/e \approx 0.632$ in the worst case~\cite{karp1990optimal} and  $\approx 0.7299$ for the known i.i.d. model~\cite{brubach2016new}. See \cite{borodin2020experimental} for an extensive empirical study of the existing algorithms. Other algorithms examined in the experimental section include \cite{feldman2009online, bahmani2010improved, manshadi2012online, jaillet2014online, durr2016power, borodin2019conceptually}.

More generally, there has been lots of interest in online algorithms with predictions over the last few years, for problems like caching~\cite{lykouris2018competitive, rohatgi2020near, wei2020better, jiang2020online}, ski-rental and its generalizations~\cite{purohit2018improving, gollapudi2019online, anand2020customizing, angelopoulos2020online}, scheduling~\cite{mitzenmacher2020scheduling,lattanzi2020online} matching~\cite{kumar2019semi, antoniadis2020secretary, lavastida2020learnable} and learning~\cite{dekel2017online,bhaskara2020online}. Other areas impacted by learning-based algorithms include combinatorial optimization~\cite{khalil2017learning,balcan2018learning,dinitz2021faster},  
similarity search~\cite{salakhutdinov2009semantic, weiss2009spectral, jegou2011product,wang2016learning,dong2020learning}, 
data structures~\cite{kraska2017case,mitz2018model} 
and streaming/sampling algorithms~~\cite{hsu2019learning,jiang2020learning,eden2021learning}.
See \cite{mitzenmacher2020algorithms} for an excellent survey of this area.

\section{Algorithm}
\label{sec-algorithm}
\paragraph{Online Bipartite Matching}
The online bipartite matching problem is defined as follows.
Given a bipartite graph $G = (U \cup V, E)$, we call $U$ the ``offline'' side and $V$ the ``online'' side of the bipartition.
Let $n = |U|$ and $m = |V|$.
The nodes in $U$ are known beforehand and the nodes in $V$ arrive one at a time, along with their incident edges.
An online bipartite matching algorithm maintains a matching throughout the process, with the goal of maximizing the size of the matching.
As each node $v \in V$ arrives, the algorithm can pick one of its neighboring edges to add to the matching.

\paragraph{MinPredictedDegree}
\begin{algorithm}[tb]
  \caption{MinPredictedDegree}
  \label{alg:minpreddegree}
\begin{algorithmic}
  \STATE {\bfseries Input:} Offline nodes $U$ and degree predictor $\sigma: U \rightarrow \mathbb{R}_{\geq 0}$
  \STATE {\bfseries Output:} Matching $M$
  \STATE Initialize $M \leftarrow \emptyset$.
  \WHILE{online node $v \in V$ arrives}
    \STATE $N(v) \leftarrow$ unmatched neighbors of $v$
    \IF{$|N(v)| > 0$}
      \STATE $u^* \leftarrow \argmin_{u \in N(v)} \sigma(u)$ (ties broken arbitrarily)
      \STATE $M \leftarrow M \cup \{(u^*, v)\}$ 
    \ENDIF
  \ENDWHILE
\end{algorithmic}
\end{algorithm}

In addition to knowing the offline nodes $U$ beforehand, MinPredictedDegree (MPD) is given a degree predictor $\sigma: U \rightarrow \mathbb{R}_{\geq 0}$.
In practice, this predictor could be inferred from additional knowledge about the graph or from past data.
When a node $v \in V$ arrives, MPD (see Algorithm~\ref{alg:minpreddegree}) uses this predictor to greedily select the minimum predicted degree neighbor $u^*$ of $v$ that is not already covered in the matching and then adds the edge $\{u^*, v\}$ to the matching.
If no such valid neighbor exists, MPD does nothing with $v$.
Intuitively, low degree offline nodes should be matched as early as possible as they only appear a few times while we will have many chances to match high degree offline nodes.  

The MPD algorithm has similar structure to the worst-case optimal Ranking algorithm~\cite{karp1990optimal} which assigns a random cost to each offline node and at each step greedily matches with the lowest cost offline neighbor.
Specifically, if the degree predictor is random, MPD and Ranking are equivalent. As we show in the later sections, if the predictor is ``good enough'', MPD often performs much better than Ranking, both in theory and in practice. 

%%PIOTR: No need to be specific here, given the intro
\iffalse
In particular, we find experimentally and through an analysis on random power law graphs that MPD competes favorably against baseline algorithms (including Ranking) and often outputs matchings almost as large as the maximum matching. In the worst case (even if the predictions are adversarial), MPD achieves a competitive ratio of $1/2$ as it returns a maximal matching (compared to $1 - 1/e$ achieved by Ranking). We also show that this bound is tight: see Appendix~\ref{appendix-worst-case-bound} for the details.
\fi
%MPD can perform poorly if the degree predictor is arbitrary or if high degree nodes have poor edge expansion compared to low degree nodes (making it disadvantageous to always prioritize low degree nodes).  

%However, often these adversarial structures do not appear in practice and matching low degree nodes first leads to better results.
%In fact, the hard instance given in the seminal paper by Karp, Vazirani, and Vazirani~\cite{karp1990optimal} that introduced the online bipartite matching problem and the Ranking algorithm relies on the fact that algorithms with no extra information on the graph (e.g.\ no degree predictions) must often mistakenly match high degree left nodes rather than low degree left nodes when given a choice. 

%PIOTR: I commented this text out because it felt tangential to the main topic of this section.We might want to move it elsewhere.

\section{Optimality of MPD on \GraphModel{} graphs}\label{sec-opt}
In this section we show that the size of the matching found by the the MPD algorithm stochastically dominates the size of the matching found by any other algorithm. We start by providing some preliminaries for the analysis.
\paragraph{Preliminaries}
For $p\in [0,1]^n$ and $q\in [0,1]^m$, let $I_{p, q}$ denote an instance of a CLV-B graph with $n=|U|$ offline nodes, $m=|V|$ online nodes, and weight vectors $p$ and $q$, such that the probability that an edge $(u_i,v_j)$ exists is equal to $p_iq_j$ for any $i \in [n], j \in [m]$.  Assume with no loss of generality that $p$ is ordered, $p_1 \leq p_2 \leq \ldots \leq p_n$. 
Note that the expected degree of the offline node $u_i$ is $p_i\|q\|_1$, i.e., it is proportional to  the weight $p_i$.

In the online setting, the nodes of $V$ arrive sequentially in the order $v_1,\dots,v_m$ with the random neighborhood of $v_j\in V$ being revealed at the arrival of $v_j$. When $v_j$ arrives, an online bipartite algorithm $A$ can match $v_j$ to any of its unmatched neighbors in $U$ but cannot change its decision later.
For any online bipartite matching algorithm $A$, let $A(I_{p, q})$ denote the size of the matching attained by $A$ on the instance $I_{p,q}$.
Let $A_0$ be the MinPredictedDegree algorithm which matches a node $v_j$ with neighborhood $S$ to an unmatched node $u_i \in S$ such that $p_i$ minimal, i.e.\ to an available node in $S$ with minimal expected degree (ties broken arbitrarily but consistently, e.g.\  by sorted order of the offline node id's). Let $p(A, S)$ be the resulting set of weights after algorithm $A$ (potentially) chooses a neighbor in $S$ to match with.

Consider two ordered weight vectors $p, p'$ both of length $n$.
We say that $p'$ \emph{dominates} $p$, equivalently $p \preceq p'$, if $p_i \leq p'_i$ for all $i \in [n]$.
% \paragraph{Stochastic dominance of MPD}
We are now ready to state our main result on the optimality of MPD.

\begin{theorem}
\label{thm:mpd-opt}
Let $p\in [0,1]^n$ and $q\in[0,1]^m$. Let $A$ be any online algorithm and let $t\geq 0$. Then,
\[
    \Pr(A(I_{p,q}) \geq t) \leq \Pr(A_0(I_{p,q}) \geq t).
\]
\end{theorem}
To prove the theorem, we will need two technical Lemmas. Informally, Lemma~\ref{lem:greedy} states that for any $S\neq \emptyset$, it is an advantage for $A_0$ if the neighborhood of the first arriving node is $S$ rather than the empty set. Lemma~\ref{lem:domination} (the proof of which is the main technical challenge) states that if $p\preceq p'$, then $A_0(I_{p',q})$ stochastically dominates $A_0(I_{p,q})$. Intuitively, Theorem~\ref{thm:mpd-opt} then follows from Lemma~\ref{lem:domination} by inducting on the number of online nodes $m$. For any algorithm $A$ and non-empty subset $S\subseteq[n]$, if $A$ matches $v_1$ to a node in the neighborhood $S$, then $p(A,S)\preceq p(A_0,S)$, and we can apply Lemma~\ref{lem:domination} together with the induction hypothesis with $m-1$ online nodes. We need Lemma~\ref{lem:greedy} to handle the issue that $A$ may not to match $v_1$ even in the case that $S$ is non-empty. The proofs of the two lemmas and of Theorem~\ref{thm:mpd-opt} are postponed to Appendices~\ref{app:greedy}, \ref{app:domination} and \ref{app:mpd-opt}.
\begin{lemma}
\label{lem:greedy}
Let $p\in[0,1]^n$ and $q\in [0,1]^m$ be weight vectors. Let $p^*\in[0,1]^{n-1}$ be obtained from $p$ by removing its $i$'th entry for some $i\in[n]$. For any $t\geq 0$,
\[
\Pr(A_0(I_{p,q}) \geq t) \leq \Pr(A_0(I_{p^*,q}) \geq t-1).
\]
\end{lemma}

%\begin{proof}
%See Appendix~\ref{app:greedy}.
%\end{proof}

\begin{lemma}
\label{lem:domination}
Let $p,p'\in[0,1]^n$, be ordered weight vectors and $q\in[0,1]^m$. Suppose that $p \preceq p'$. For any $t\geq 0$,
\[
    \Pr(A_0(I_{p,q}) \geq t) \leq \Pr(A_0(I_{p', q}) \geq t).
\]
\end{lemma}

While optimally only holds when the predicted degrees are the expected degrees (or at least induce the same ordering over the offline nodes), the performance of MPD cannot be much worse if the predictions are slightly off. Formally, for  an arbitrary degree predictor $\sigma$, let $p[\sigma]$ be the array of \GraphModel{} offline weights ordered by $\sigma$ and let $\LIS(p[\sigma])$ be the size of the longest increasing subsequence in this array. We show (via a more general result) in Appendix~\ref{appendix-noisy} that MPD will match at most $n - \LIS(p[\sigma])$ fewer nodes than when given the expected degrees as predictions.
%\begin{proof}
%See Appendix~\ref{app:domination}.
%\end{proof}

%See Appendix~\ref{app:mpd-opt} for the proof of Theorem~\ref{thm:mpd-opt}.

\section{Competitive ratio of MPD on symmetric \GraphModel{} random graphs}
\label{sec-analysis}
%\Justin{change definition of CLV-B and title}
%Following in a long line of work in the average-case analysis of matching algorithms initiated by~\cite{karp1981maximum}, we analyze MPD under a natural random bipartite graph model we refer to as \GraphModel{}, a bipartite version of the Chung-Yu-Vu random graph model~\cite{chung2004spectra}.
Though we know that MPD is optimal within the \GraphModel{} model, this result does not give explicit competitive ratios for MPD.
In this section we analyze MPD under the symmetric \GraphModel{} model, and derive a set of equations that give a lower bound on MPD's competitive ratio. 
To recap, the symmetric model is parameterized by $n=|U|$, $m=|V|$, and a vector $\mathbf{d} = \{d_i\}_{i=1}^n$ corresponding to the expected degrees of the offline nodes.
Formally, for any $u_i \in U$ and $v_j \in V$, the edge $\{u_i,v_j\}$ appears in the graph with probability $d_i/m$.
%This model corresponds to the case where consumers pick their edges i.i.d.\ with $\mathbf{d}$ describing the distribution over producers.

As in the previous section, we analyze MPD when the degree predictions are given by the expected degrees $\mathbf{d}$.
% Note that in this case, the degree predictor is noisy.
% In particular, for each offline node $u_i \in U$, the \emph{true} degree of $u_i$ is distributed as a Binomial random variable with sample size $m$ and probability $d_i/m$.
Our main results within this model are a set of equations that describe the size of the matching produced by MPD as well as the size of the maximum matching.
% \vspace{-1em}
\begin{itemize}[leftmargin=*]
    \item Given a set of expected degrees $\mathbf{d}$, Equation~\ref{eq-mpd-expectation} models the behavior of MPD on a symmetric \GraphModel{}($\mathbf{d}$) graph.
    We extend these results to the asymptotic case in Appendix~\ref{appendix-asymptotic}, giving the expected matching size as $n, m \rightarrow \infty$ for a given distribution of expected degrees.
    \item Given a set of expected degrees $\mathbf{d}$,in Appendix~\ref{appendix-offline-expectation}, we give an upper bound on the expected size of the maximum matching on a symmetric \GraphModel{}($\mathbf{d}$) graph, and in Appendix~\ref{appendix-asymptotic}, we give the asymptotic equivalent.
    Empirically, we find this upper bound to be close to the maximum matching size when $\mathbf{d}$ follows a power law distribution.
    \item Using these equations, we show that in expectation MPD returns matchings almost as large as the maximum when the expected degrees of the offline nodes follow a power law distribution (see Table~\ref{table-analysis-expcut} and Figure~\ref{fig-analysis-zipf}).
    For both MPD and the maximum matching, we show that the matching sizes are concentrated about their expectations (Appendix~\ref{appendix-mpd-concentration} and~\ref{appendix-offline-concentration}), implying that on these graphs, MPD achieves a large competitive ratio.
\end{itemize}

\subsection{Competitive ratios on power law graphs}

\begin{table}[ht]
\begin{center}
\begin{small}
\begin{sc}
\begin{tabular}{lcccr}
\toprule
Cutoff $\lambda$ & $\alpha=0.5$ & $\alpha=1$ & $\alpha=1.5$ & $\alpha=2$   \\
\midrule
10 & 0.967 & 0.948 & 0.934 & 0.928 \\
100 & 0.998 & 0.986 & 0.958 & 0.937\\
1000 & 1.000 & 0.995 & 0.966 & 0.940 \\
10000 & 1.000 & 0.997 & 0.969 & 0.940 \\
100000 & 1.000 & 0.998 & 0.970 &  0.940 \\
\bottomrule
\end{tabular}
\end{sc}
\end{small}
\end{center}
\caption{Lower bound on the competitive ratio of MPD on symmetric \GraphModel{} graphs with offline expected degrees following a power law with exponential cutoff distribution as $n,m \rightarrow \infty$. The fraction of offline nodes with expected degree $d$ is proportional to $d^{-\alpha} e^{-d/\lambda}$ for $d=\{1,2,...\}$.}
\label{table-analysis-expcut}
% \vspace{-0.2 in}
\end{table}

% In Table~\ref{table-analysis-expcut} (and Figure~\ref{fig-analysis-zipf} in Appendix~\ref{appendix-compratiofig}), we show the competitive ratio of MPD based on our equations for expected size of MPD's matching and for the expected size of the maximum matching.
% Note that in the non-asymptotic case, our equations give an approximation to the true expectation of MPD's matching size due to modeling the process with continuous differential equations.
% For the maximum matching analysis, our equations give an upper bound on the maximum matching size, which translates to a lower bound on our ratio (so we will only underestimate the true performance of MPD).
% For the types of graphs we consider, even for relatively small $n$, the analytic ratios match the empirical ratios we found in Figure~\ref{fig-zipf-comparison}, indicating the approximation/bound error is small for these graphs.

In Table~\ref{table-analysis-expcut}, we show the competitive ratio of MPD on symmetric \GraphModel{} graphs with expected offline degrees following a power law with exponential cutoff distribution~\cite{borodin2020experimental, mitzenmacher2005probability} and with $n, m \rightarrow \infty$.
For $d=\{1,2,...\}$, the fraction of offline nodes with expected degree $d$ is proportional to $d^{-\alpha} e^{-d/\lambda}$ for exponent $\alpha$ and cutoff $\lambda$.
Note that in the asymptotic case, as the sizes of MPD's matching and the maximum matching are concentrated about their expectations (Theorems~\ref{thm-mpd-concentration},~\ref{thm-offline-concentration}), the ratio of expectations is equivalent to the competitive ratio (expectation of ratio).
When the exponent is small or the cutoff is large, MPD achieves a better competitive ratio, with the ratio exceeding $0.99$ when both occur.
When $\alpha=2$, while MPD still achieves a competitive ratio of up to $0.94$, the competitive ratio is not as affected by a larger cutoff as with smaller exponents (the power law factor is already significantly limiting the fraction of offline nodes with large expected degree). The analysis we develop is general and can be used to evaluate MPD on symmetric \GraphModel{} graphs with different parameters than those we have considered.

% In what follows, we develop equations that allowed us to produce these results. The analysis we present is general and can be used to evaluate MPD on symmetric \GraphModel{} graphs with different parameters than those we have considered.

\subsection{Differential equation analysis of MPD}
\label{sec-analysis-minpreddegree}
% In order to analyze MPD, we construct a family of random variables and corresponding differential equations that model the number of unmatched left nodes throughout the running of the algorithm.
Let $Y_d^t$ be the number of offline nodes with expected degree $d$ who are unmatched by MPD after seeing the $t$th online node.
Within this random graph model, $\{Y_d^t\}_{t=0}^m$ form a Markov chain with the following expected evolution:
\begin{equation}
\begin{split}
    \E[Y_d^{t+1} - Y_d^t] =& - \left(1 - \left(1 - d/m\right)^{Y_d^t}\right) \prod_{d' < d} \left(1 - d'/m\right)^{Y_{d'}^t}.
\end{split}
\end{equation}
The first term corresponds to the probability that at least one unmatched offline node with expected degree $d$ is incident on the $(t+1)$st online node while the second term corresponds to the probability that this online node has no neighboring unmatched offline nodes with smaller expected degree (which would be prioritized).

Let $k_d = -\log(1 - d/m)$. To simplify the analysis of MPD, it will be helpful to consider the random variables $Z_d^t = -k_d * Y_d^t$ where
\begin{equation}
\label{eq-transitions2}
\begin{split}
    \E[Z_d^{t+1} - Z_d^t] = k_d \left(1 - e^{Z_d^t}\right) \prod_{d' < d} e^{Z_{d'}^t}.
\end{split}
\end{equation}

Following the work of Kurtz and many subsequent researchers~\cite{kurtz1981approximations, wormald1995differential, mitzenmacher1997studying, luby2001efficient, noiry2021nonasymptotic}, we show that the behavior of MPD as described by these Markov chains is well approximated by the trajectory of the following system of differential equations for all unique expected degrees $d$ in $\mathbf{d}$:
\begin{equation}
\label{eq-diffeq}
    \frac{dz_d(t)}{dt} = k_d \left(1 - e^{z_d(t)}\right) \prod_{d' < d} e^{z_{d'}(t)}.
\end{equation}
These functions $z_d(t)$ represent continuous-time approximations of the Markov chains with their derivatives corresponding to expected change from Equation~\ref{eq-transitions2}.
In Appendix~\ref{appendix-diffeqsoln}, we give the solution to these differential equations.
Relying on past work~\cite{luby2001efficient}, we give the following theorem (see Appendix~\ref{appendix-mpd-expectation} for proof).
\begin{theorem}
\label{thm-mpd-expectation}
Let $G$ be a symmetric \GraphModel{} random graph with unique expected offline degrees $\{\delta_i\}_{i=1}^\ell$.
Let $f_d = \lambda_d \cdot n$ be the number of offline nodes with expected degree $d$.
Then, the expected (over the randomness in $G$) size of the matching formed by MPD approaches 
\begin{equation}
\label{eq-mpd-expectation}
    \sum_{i=1}^\ell f_{\delta_i} + z_{\delta_i}(m)/k
\end{equation}
as $n=m$ approach infinity, where $z_{\delta_i}(t)$ for $i \in \{1,\ldots,\ell\}$ form the solution to the system of differential equations in Equation~\ref{eq-diffeq}.
%In addition, if $Y_d$ is the true number of unmatched offline nodes of expected degree $d$, then there exists a constant $c$ s.t.
%\begin{equation}
% \label{eq-mpd-whp}
%     \Pr(Y_d > -z_{\delta_i}(m)/k + cm^{5/6}) < \ell m^{2/3} \exp(- m^{1/3}/2).
%\end{equation}
\end{theorem}

The solution to the system of differential equations gives us a closed form continuous-time approximation for expected performance of MPD in terms of $\mathbf{d}$.
In particular, in the asymptotic case, the equations give the exact expected performance and in the non-asymptotic case give an approximation on the number of unmatched offline nodes (and thus the matching size).

\section{Experiments}
\label{sec-experiments}
In this section, we evaluate the empirical performance of MPD on real and synthetic data.
For each dataset, we report the empirical competitive ratio of MPD and a variety of baselines.
In each case, the empirical competitive ratio is the average, over 100 trials, of the ratios of the sizes of the matchings outputted by a given algorithm and the sizes of the maximum matching.
In addition to the average ratio, we report one standard deviation of the ratio across the trials.

\paragraph{Datasets}
We evaluate MPD on the following  datasets.
\begin{itemize}[leftmargin=*]
    \item \textbf{Oregon:} $9$ graphs\footnote{
        \label{foot-doublecover}The graphs in the Oregon and CAIDA datasets are made bipartite following the \emph{bipartite double cover} or \emph{duplicating method} used in prior work~\cite{ borodin2020experimental}. Given a graph $G=(V,E)$, the bipartite double cover of $G$ is the graph $G'=(U' \cup V', E')$ where $U'$ and $V'$ are copies of $V$ and there is an edge $\{u'_i, v'_j\} \in E'$ if and only if $\{v_i, v_j\} \in E$.
    }
    sampled over $3$ months representing a communication network of internet routers from the Stanford SNAP Repository~\cite{leskovec2014snap}. Each graph has $\sim 10k$ nodes on each side of the bipartition and $\sim 40k$ edges. For MPD, the offline degree predictor $\sigma: U \rightarrow \mathbb{R}$ is based on the first graph: if an offline node $u$ (i.e. a specific router) appeared in the first graph, $\sigma(u)$ is the degree of $u$ in that graph. If an offline node $u$ did not appear in the first graph, $\sigma(u) = 1$. 
    For each trial, the order of arrival of the online nodes is randomized.

    \item \textbf{CAIDA:} $122$ graphs\footnotemark[\value{footnote}] sampled approximately weekly over $4$ years representing a communication network of internet routers from the Stanford SNAP Repository~\cite{leskovec2014snap}.
    Each graph has $\sim 20k$ nodes on each side of the bipartition and $\sim 100k$ edges.
    The degree predictor is the same as for the Oregon dataset (for each year, the first graph of the year is used to form the predictor).
    As seen in Figure~\ref{fig-oracle-analysis} (see Appendix~\ref{appendix-experiments}), the degree distribution of the graphs for both the Oregon and Caida datasets are long-tailed and the error of the first graph predictor increases over time as the underlying graph evolves. 
    For each trial, the order of arrival of the online nodes is randomized.

    \item \textbf{Symmetric \GraphModel{} random graph:}
    We consider symmetric \GraphModel{} model where the expected offline degrees are distributed according to Zipf's Law, a popular power law distribution where $d_i = C \cdot i^{-\alpha}$~\cite{mitzenmacher2005probability}. 
    %\Piotr{This might be moved earlier, since we also use it for theory.}
    In our experiments, we set size $n=m=1000$, set $C=m/2$, and vary the exponent $\alpha$.
    %The performance of MPD on symmetric \GraphModel{} random graphs is further explored in Section~\ref{sec-analysis}.
    
    \item \textbf{Known i.i.d.:} 
    Finally, we compare MPD to algorithms for the known i.i.d.\ model, copying the methodology of Borodin et al.~\cite{borodin2020experimental} for synthetic power law graphs (Molloy Reed and Preferential Attachment) and real world graphs.
    In the Molloy Reed experiments, the type graph is sample from a family of random graphs with degrees distributed according to a power law with exponential cutoff.
    In the Preferential Attachment experiments, the type graph is formed by the preferential attachment model in which edges are added sequentially with edges between high degree nodes being more likely.
    The Real World graphs are comprised of a variety of graphs from the Network Repository~\cite{rossi2015networkrepository}. See Appendix~\ref{appendix-experiments} for more results on Real World graphs.
\end{itemize}

\paragraph{Baselines} We compare our algorithm to a variety of baseline algorithms.
\begin{itemize}[leftmargin=*]
    \item \textbf{Ranking} In all experiments, we compare to the classic, worst-case optimal Ranking algorithm~\cite{karp1990optimal}.
    
    \item \textbf{MinDegree} The MinDegree algorithm is a version of MPD with a perfect oracle, i.e. $\sigma(u)$ returns the true degree of $u$. In comparison with MPD, MinDegree shows the effect of prediction error on the performance of MPD.
    
    \item \textbf{Known i.i.d.\ baselines} For the experiments in the known i.i.d. case, we also compare to the baselines in the extensive empirical study of~\cite{borodin2020experimental}--see their paper for detailed descriptions of all algorithms. The code is distributed under the GPL license.
    % Most of these baselines rely on the type graph in order to do some preprocessing.
    Notably, the algorithms Category-Advice and 3-Pass are \emph{not} strictly online algorithms: they take multiple passes over the data, using some limited information from previous passes to make better decisions in the next pass.
    It should also be noted that BKPMinDegree is distinct from either the MPD or MinDegree algorithms we have described--it does not use the type graph but rather maintains and updates an estimate of the degree of the offline nodes throughout the runtime of the algorithm.
    
    Most known i.i.d.\ baselines are \emph{not} greedy--they do not always match an online node even if it has unmatched neighbors. \cite{borodin2020experimental} evaluate greedy augmentations of these algorithms (denoted by Algorithm(g)) which match to an arbitrary unmatched neighbor in these cases and generally show them to outperform their non-greedy counterparts. We additionally evaluate MPD augmented versions of these algorithms (denoted by Algorithm(MPD)) which applies the MPD rule in these cases using the expected degrees as predictions.
\end{itemize}

\begin{figure}[ht]
\begin{center}
\centerline{\includegraphics[width=0.45\columnwidth]{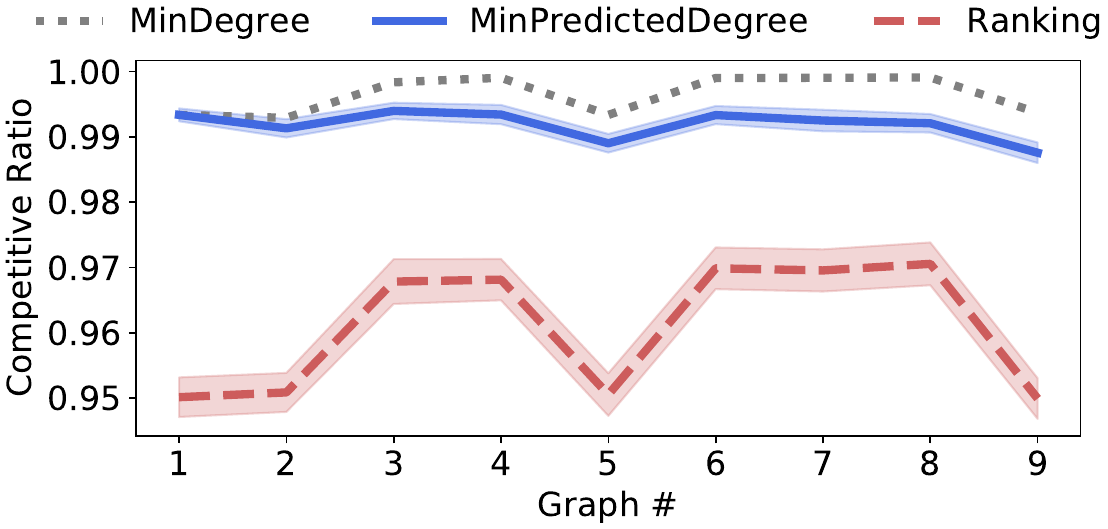}}
\caption{Comparison of empirical competitive ratios on the Oregon dataset. The first graph is used to form  predictions.}
\label{fig-oregon}
% \vspace{-0.25 in}
\end{center}
\end{figure}

\begin{figure}[ht]
\begin{center}
\centerline{\includegraphics[width=\columnwidth]{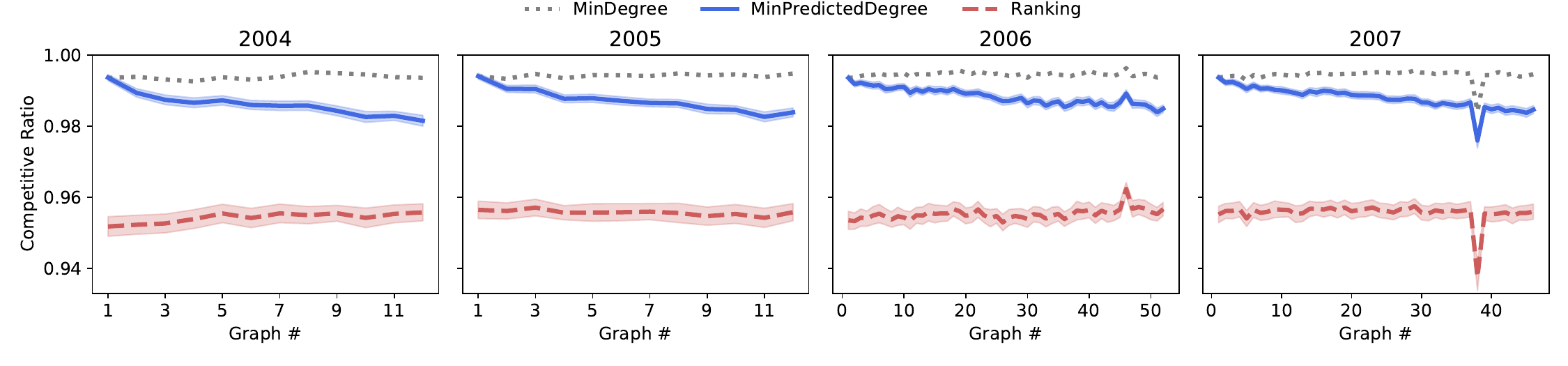}}
\caption{Comparison of empirical competitive ratios on the CAIDA dataset. For each subfigure, the first graph of the year is used to form  predictions for the rest of the year.}
\label{fig-caida}
% \vspace{-0.1 in}
\end{center}
\end{figure}

\begin{figure}[ht]
\centering
\begin{subfigure}[b]{0.4\textwidth}
    \centering
    \includegraphics[width=\textwidth]{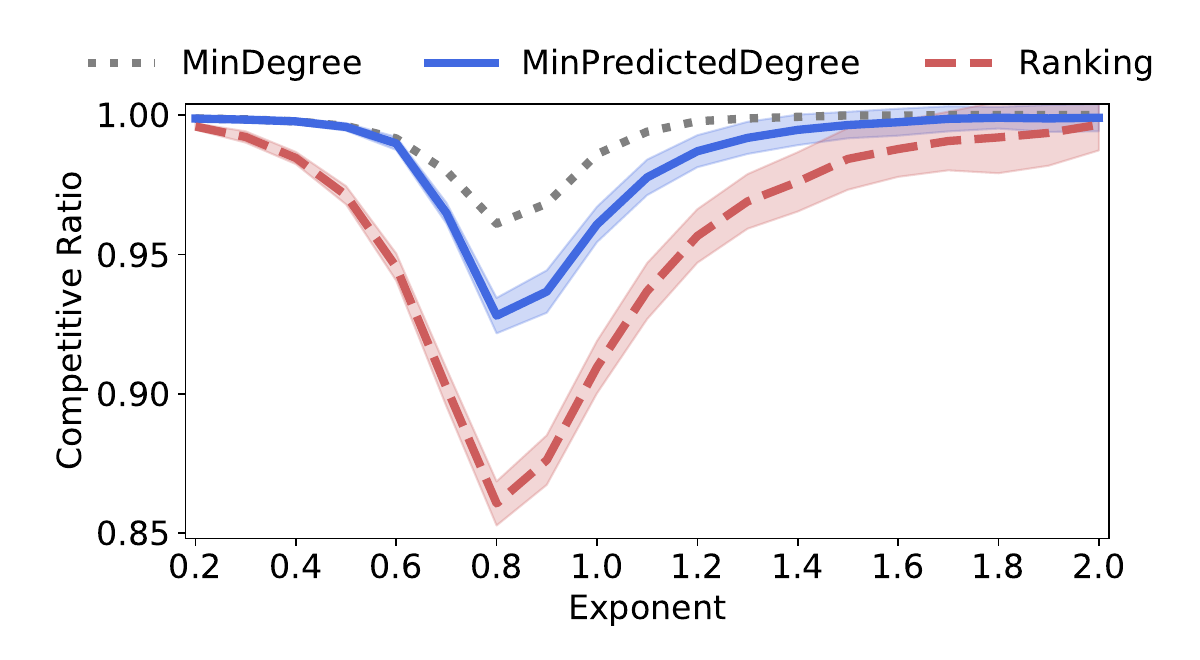}
    \caption{Comparison across Zipf's Law exponents.\\}
    \label{fig-zipf-comparison}
\end{subfigure} \hspace{0.3in}
\begin{subfigure}[b]{0.4\textwidth}
    \centering
    \includegraphics[width=\textwidth]{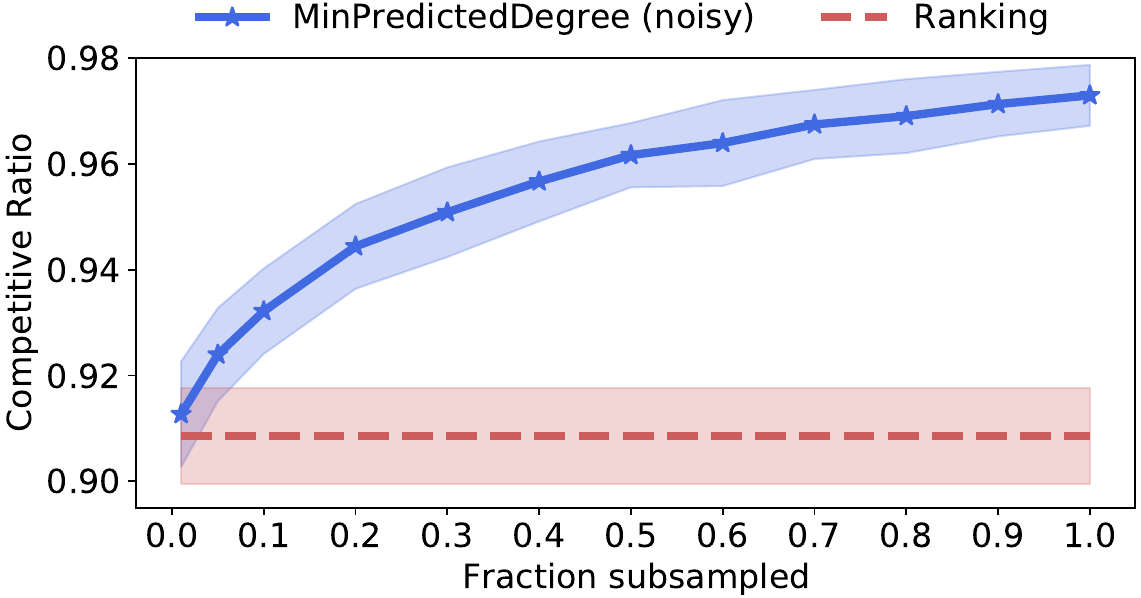}
    \caption{Analysis of predictor noise (exponent $\alpha=1$).}
    \label{fig-zipf-noise}
\end{subfigure}
\caption{Comparison of empirical competitive ratios on symmetric \GraphModel{} random graphs with offline expected degrees following Zipf's Law with exponent $\alpha$. In (a), we vary $\alpha$ and MPD uses the expected degree as its predictor. In (b), the degree predictor is the offline degree in a random subgraph using a (varying) fraction of the online nodes.}
\label{fig-zipf}
% \vspace{-0.1 in}
\end{figure}

\begin{figure}[h!]
\begin{center}
\centerline{\includegraphics[width=\textwidth]{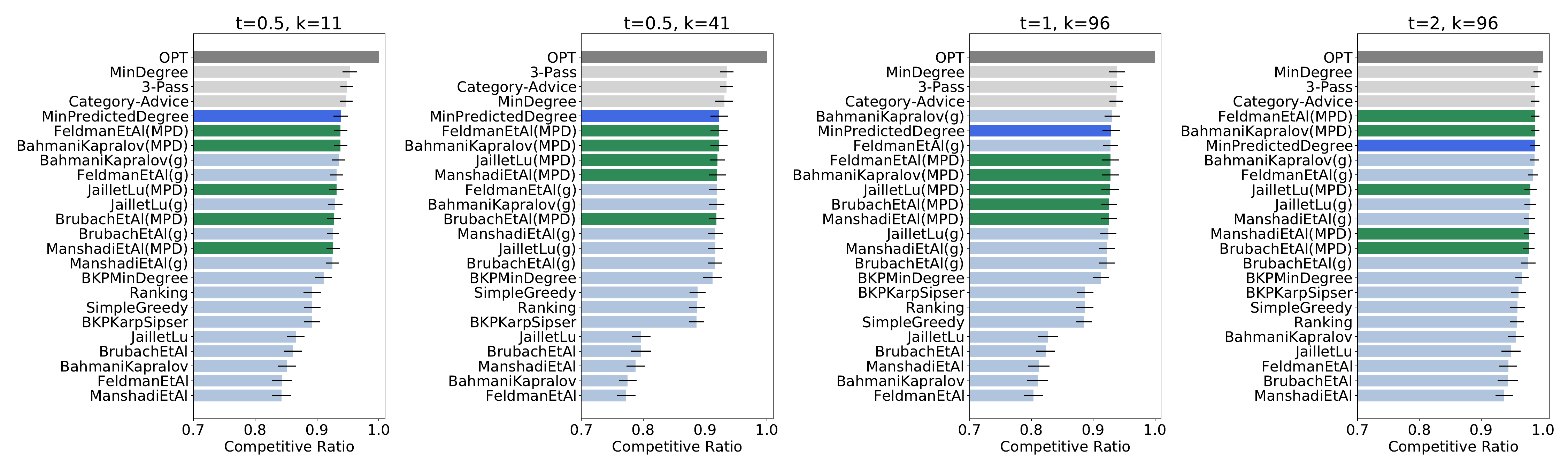}}
\caption{Comparison of empirical competitive ratios for Known i.i.d.\ Molloy-Reed graphs. Algorithms depicted in gray are \emph{not} online algorithms (they use extra information or multiple passes). Algorithms in green are augmented with MPD.}
\label{fig-molloyreed}
% \vspace{-0.25 in}
\end{center}
\end{figure}

\begin{figure}[ht]
\begin{center}
\centerline{\includegraphics[width=\textwidth]{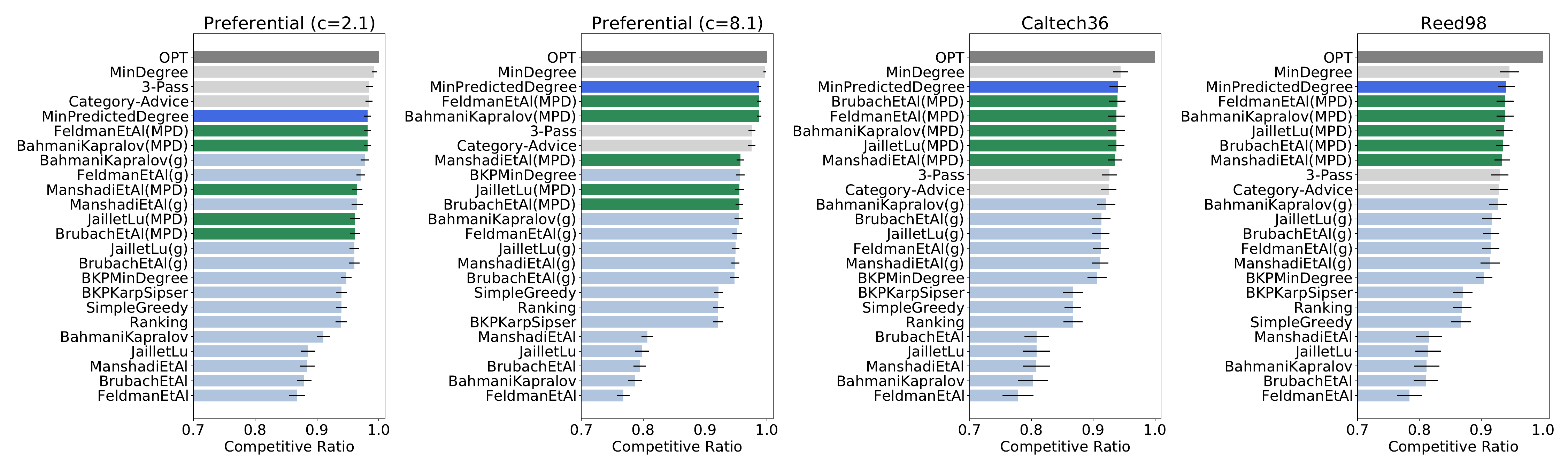}}
\caption{Comparison of empirical competitive ratios on Known i.i.d.\ Preferential Attachment graphs and Real World graphs. Algorithms depicted in gray are \emph{not} online algorithms (they use extra information or multiple passes). Algorithms in green are augmented with MPD. See Appendix~\ref{appendix-experiments} for more Real World results.}
\label{fig-prefANDreal}
% \vspace{-0.25 in}
\end{center}
\end{figure}
 
\paragraph{Results} Across the various datasets, MPD performs well compared to the baseline algorithms.
For the Oregon, CAIDA, and symmetric \GraphModel{} random graph datasets, MPD significantly outperforms Ranking, and for Oregon and CAIDA, the performance of the algorithm mildly declines as the degree predictions degrade.
For the known i.i.d.\ datasets, MPD often outperforms all \emph{online} baselines, despite making only limited use of the known i.i.d.\ model.
Additionally, augmenting the known i.i.d.\ algorithms with the (MPD) rule often improves their performance over both the base and the greedy (g) versions of the algorithms.
\begin{itemize}[leftmargin=*]
    \item \textbf{Oregon and CAIDA (Figures~\ref{fig-oregon},~\ref{fig-caida}):} 
    On the Oregon dataset, MPD achieves a competitive ratio of $\sim 0.99$ across the graphs compared with competitive ratios ranging from $0.95$ to $0.97$ for Ranking.
    Compared with MinDegree, which uses knowledge of the true offline degrees, MPD's performance slowly degrades over time as the graphs become less similar to Graph $\# 1$ (see Figure~\ref{fig-oracle-analysis} in Appendix~\ref{appendix-experiments} for quantitative details).
    % \item \textbf{CAIDA:} 
    
    Similarly, on the CAIDA dataset, MPD does significantly better than Ranking, achieving competitive ratios almost always greater than $0.98$ compared to ratios around $0.95$, respectively.
    As the performance of the degree predictor degrades over time, the performance of MPD gradually declines (though it still significantly outperforms Ranking for both datasets).
    
    \item \textbf{Symmetric \GraphModel{} random graph (Figure~\ref{fig-zipf}):}
    For symmetric \GraphModel{} random graphs with offline expected degrees following Zipf's Law, MPD outperforms Ranking across a spectrum of exponents $\alpha$ ranging from $0.2$ to $2$.
    For exponents less than $0.5$ and greater than $1.5$, MPD achieves a competitive ratio close to $1$ (greater than $0.995$).
    All of the online algorithms have worse competitive ratios when the exponent is closer to one with MPD achieving a ratio of $\sim 0.93$ and Ranking achieving a ratio of $\sim 0.86$ when $\alpha=0.8$.
    Though MPD does worst at $\alpha=0.8$, it also achieves its greatest improvement over Ranking at this setting.
    %Using the equations in Section~\ref{sec-analysis}, in Figure~\ref{fig-analysis-zipf}, we see that for exponents other than $\alpha=1$, the competitive ratio of MPD improves as the size of the graph grows.
    
    In Figure~\ref{fig-zipf-noise}, we analyze the performance of MPD with a noisy degree predictor on Zipf's Law symmetric \GraphModel{} random graphs with exponent 1.
    To introduce noise, the degree predictor $\sigma(u)$ is given by the number of neighbors $u$ has with a random subset of the online nodes $V$.
    As we decrease the fraction of $V$ we subsample, thus increasing the variance of the predictor, the performance of MPD steadily declines.
    Even when the degree predictor only uses 10\% or even 1\% (the leftmost point on the graph) of the online nodes, it still outperforms Ranking.
    
    \item \textbf{Known i.i.d. (Figures~\ref{fig-molloyreed},~\ref{fig-prefANDreal}):} Across all of the experiments in the known i.i.d.\ model, MPD is among the top online algorithms, and is often the best performing online algorithm (note the algorithms in gray are \emph{not} strictly online algorithms).
    Most of the algorithms (e.g.\ BahamiKapralov and ManshadiEtAl) rely heavily on the type graph, including precomputing an optimal matching on the type graph.
    By contrast, MPD only uses first-order information: it only looks at degrees and does not rely on any information about specific edges.
    Even so, in most cases, it outperforms all of the other online algorithms.
    Additionally, the (MPD) augmented versions of the known i.i.d.\ algorithms always beat the base algorithms and often beat the greedy (g) versions, indicating the potential of predicted degrees to be integrated with other algorithms.
    Note that while the standard deviations are quite wide (the known i.i.d.\ model is inherently stochastic), as the results are summarized over 100 trials, relatively small differences in the \emph{average} performance of these algorithms are statistically significant as the standard error is small.
\end{itemize}

\paragraph{Acknowledgements}
This research was supported in part by the NSF TRIPODS program (awards CCF-1740751 and DMS-2022448), NSF award CCF-2006798, Simons Investigator Award, NSF Graduate Research Fellowship under Grant No. 1745302, MathWorks Engineering Fellowship, and DFF-International Postdoc Grant 0164-00022B from the Independent Research Fund Denmark.
\bibliography{arxiv}
\bibliographystyle{plain}

%%%%%%%%%%%%%%%%%%%%%%%%%%%%%%%%%%%%%%%%%%%%%%%%%%%%%%%%%%%%
\section*{Checklist}

%%% BEGIN INSTRUCTIONS %%%
% The checklist follows the references.  Please
% read the checklist guidelines carefully for information on how to answer these
% questions.  For each question, change the default \answerTODO{} to \answerYes{},
% \answerNo{}, or \answerNA{}.  You are strongly encouraged to include a {\bf
% justification to your answer}, either by referencing the appropriate section of
% your paper or providing a brief inline description.  For example:
% \begin{itemize}
%   \item Did you include the license to the code and datasets? \answerYes{See Section~\ref{gen}.}
%   \item Did you include the license to the code and datasets? \answerNo{The code and the data are proprietary.}
%   \item Did you include the license to the code and datasets? \answerNA{}
% \end{itemize}
% Please do not modify the questions and only use the provided macros for your
% answers.  Note that the Checklist section does not count towards the page
% limit.  In your paper, please delete this instructions block and only keep the
% Checklist section heading above along with the questions/answers below.
%%% END INSTRUCTIONS %%%

\begin{enumerate}

\item For all authors...
\begin{enumerate}
  \item Do the main claims made in the abstract and introduction accurately reflect the paper's contributions and scope?
    \answerYes{}
  \item Did you describe the limitations of your work?
    \answerYes{See, for example, Appendix~\ref{appendix-worst-case-bound}.}
  \item Did you discuss any potential negative societal impacts of your work?
    \answerNo{We did not include a discussion of this in the text of the paper as our paper is focused on fundamental research. That being said, as online bipartite matching has a wide variety of applications, developing and better understanding algorithms for this problem may have impact on these diverse applications. In particular, metrics other than the competitive ratio of algorithms (such as certain notions of fairness) are important to consider in applications involving individuals and are being studied in other works.}
  \item Have you read the ethics review guidelines and ensured that your paper conforms to them?
    \answerYes{}
\end{enumerate}

\item If you are including theoretical results...
\begin{enumerate}
  \item Did you state the full set of assumptions of all theoretical results?
    \answerYes{}
        \item Did you include complete proofs of all theoretical results?
    \answerYes{}
\end{enumerate}

\item If you ran experiments...
\begin{enumerate}
  \item Did you include the code, data, and instructions needed to reproduce the main experimental results (either in the supplemental material or as a URL)?
    \answerYes{See supplementary material.}
  \item Did you specify all the training details (e.g., data splits, hyperparameters, how they were chosen)?
    \answerYes{}
        \item Did you report error bars (e.g., with respect to the random seed after running experiments multiple times)?
    \answerYes{}
        \item Did you include the total amount of compute and the type of resources used (e.g., type of GPUs, internal cluster, or cloud provider)?
    \answerNo{The experiments are relatively lightweight and the focus of this paper is not on computational resources (rather, we focus on the quality of the algorithms' output). That being said, MPD is a very simple and computationally efficient algorithm (assuming accessing the degree predictor is not too expensive). The experiments were all run on a 2018 MacBook Pro.}
\end{enumerate}

\item If you are using existing assets (e.g., code, data, models) or curating/releasing new assets...
\begin{enumerate}
  \item If your work uses existing assets, did you cite the creators?
    \answerYes{}
  \item Did you mention the license of the assets?
    \answerYes{}
  \item Did you include any new assets either in the supplemental material or as a URL?
    \answerYes{See code in supplement.}
  \item Did you discuss whether and how consent was obtained from people whose data you're using/curating?
    \answerNA{All data was gathered from open sources intended for research use.}
  \item Did you discuss whether the data you are using/curating contains personally identifiable information or offensive content?
    \answerNA{None of our data falls within this category.}
\end{enumerate}

\item If you used crowdsourcing or conducted research with human subjects...
\begin{enumerate}
  \item Did you include the full text of instructions given to participants and screenshots, if applicable?
    \answerNA{}
  \item Did you describe any potential participant risks, with links to Institutional Review Board (IRB) approvals, if applicable?
    \answerNA{}
  \item Did you include the estimated hourly wage paid to participants and the total amount spent on participant compensation?
    \answerNA{}
\end{enumerate}

\end{enumerate}

%%%%%%%%%%%%%%%%%%%%%%%%%%%%%%%%%%%%%%%%%%%%%%%%%%%%%%%%%%%%

\appendix

\newpage
\section{The Proof of Lemma~\ref{lem:greedy}}\label{app:greedy}

In this appendix, we provide the proof of Lemma~\ref{lem:greedy}. 
\begin{proof}[Proof of Lemma~\ref{lem:greedy}]
The result follows by coupling the instances $I_{p,q}$ and $I_{p^*,q}$. Let $U=(u_1,\dots,u_n)$, $V=(v_1,\dots,v_m)$, and $U^*=(u_1,\dots u_{i-1},u_{i+1},\dots,u_n)$. 
For any instance of a bipartite graph $I$ on $(U,V)$ we can generate an instance $I^*$ on $(U^*,V)$ as follows: For each $j\in[m]$, if $N(v_j)$ is the neigborhood of $v_j$ in $I$, then the neighborhood of $v_j$ in $I^*$ is $N(v_j)\cap U^*$. It is readily checked that if $I$ is distributed as $I_{p,q}$, then $I^*$ is distributed as $I_{p^*,q}$. To prove the result, it therefore suffices to show that for any instance $I$, if $A_0(I)\geq t$, then $A_0(I^*)\geq t-1$, or equivalently, that $ A_0(I^*)\geq A_0(I)-1$. We prove this by induction on $m$. The case $m=0$ is trivial, so assume that $m>0$ and inductively that the result holds for smaller values of $m$. Let $M_j\subseteq U$ denote the set of matched nodes by MPD on $I$ after the arrival of $v_1,\dots,v_j$ and let similarly $M_j^*\subseteq U^*$ denote the set of matched nodes by MPD on $I^*$ after the arrival of $v_1,\dots,v_j$.

We will proceed by cases. First, consider the case in which MPD does not match any vertex $v_j$ to $u_i$ during its run on $I$. Then MPD's behavior on $I$ and $I^*$ will be identical as other than $u_i$ and its incident edges, the two graphs are the same. In particular, $A_0(I)=|M_m|=|M_m^*|=A_0(I^*)$.

Consider next the case in which MPD does match $u_i$ to some vertex $v_j$ during its run on $I$. Up until the arrival of $v_j$, MPD has behaved similarly on $I$ and $I^*$, and in particular, $M_{j-1}=M_{j-1}^*$. Now if the neighborhood of $v_j$ in $U^*$ does not contain an unmatched node, then $M_{j}=M_{j}^*\cup\{u_i\}$, and from this point on, MPD will behave similarly on the two instances $I$ and $I^*$. In particular, $A_0(I)=|M_m(I)|=|M_m^*(I)\cup \{u_i\}|=A_0(I^*)+1$, as desired. If on the other hand, MPD on $I^*$ matches $v_j$ to some node $u_{i'}\neq u_{i}$ in $U^*$, then $|M_j|=|M_j^*|$. Moreover, by the induction hypothesis, $|M_m\setminus M_j|\leq |M_m^*\setminus M_j^*|+1$. It follows that $A_0(I)=|M_j|+|M_m\setminus M_j|\leq |M_j^*| + |M_m^*\setminus M_j^*|+1=A_0(I^*)+1$, as desired.
\end{proof}

\section{The Proof of Lemma~\ref{lem:domination}}\label{app:domination}

In this appendix, we provide the proof of Lemma~\ref{lem:domination}. We start with some preliminaries.
For some subset of offline nodes $S \subseteq [n]$, let $\Pr_p(N_S)$ be the probability that the first online node's neighborhood is exactly the set $S$ in $I_{p,q}$. Note that $\Pr_p(N_S)$ depends only on $p$ and $q_1$ and equals $\prod_{i\in S} (p_iq_1)\prod_{i\in [n]\setminus S}(1-p_iq_1)$.
Let $p(A, S)$ be the ordered vector of weights of the unmatched offline nodes remaining after running $A$ on the first online node if this node has neighborhood $S$. If $A=A_0$, then $p(A,S)$ is obtained by removing the entry $p_i$ of $p$ corresponding to the $u_i\in S$ with minimal $p_i$ (if $S=\emptyset$, then $p(A,S)=p$), but generally, $A$ could behave differently even choosing not to match $v_1$ to any node in $S\neq \emptyset$.
\begin{proof}[Proof of Lemma~\ref{lem:domination}]
We will prove the lemma by induction on the number of online nodes, $m$.
The base case $m = 0$ is trivial. Indeed, in this case, $A_0(I_{p,q}) =A_0(I_{p',q}) = 0$ with probability $1$, so the probabilities of attaining matching size at least $t$ are equal for the two weight vectors $p$ and $p'$.

Now, we will consider the inductive case. Consider any $m > 0$ and assume the statement holds for $m-1$ online nodes. Define the $(m-1)$-dimensional vector $q^*=(q_2,\dots,q_m)$.
Then,
\begin{align}\label{eq:ngbh_expansion}
    \Pr(A_0(I_{p, q}) \geq t) =  \Pr_p(N_\emptyset) \Pr(A_0(I_{p, q^*}) \geq t) + \sum_{S \subseteq [n], S \neq \emptyset} \Pr_p(N_S) \Pr(A_0(I_{p(A_0, S),q^*}) \geq t-1),
\end{align}
and a similar identity holds with $p$ replaced by $p'$.

Denote by $r=q_1p=(q_1p_1,\dots,q_1p_n)$, and $r'=q_1p'=(q_1p_1',\dots,q_1p_n')$, so that $r_i$ and $r_i'$ are the probabilities that $v_1$ has an edge to $u_i$ and $u_i'$ in respectively $I_{p,q}$ and $I_{p',q}$. As $p\preceq p'$, also $r\preceq r'$. In particular, we can write $1-r_i'=(1-s_i)(1-r_i)$ for some $s_i\in [0,1]$. For $i\in [n]$, we let $X_i$ and $Y_i$ be independent Bernoulli variables with $\Pr[X_i=1]=r_i$ and $\Pr[Y_i=1]=s_i$. Let furthermore $Z_i$ be the Bernoulli variable which is $1$ if either $X_i=1$ or $Y_i=1$, and zero otherwise. Then $\Pr[Z_i=1]=r_i'$. We now let $N=\{i \in [n] \mid X_i=1\}$ and $N'=\{i \in [n] \mid  Z_i=1\}$, noting that $N\subseteq N'$. 
For any $T\subseteq [n]$ we can then write
\begin{align*}
    \Pr_{p'}(N_T)=\Pr[N'=T]= \sum_{S \subseteq T}\Pr[N=S] \Pr[N'=T\mid N=S]= \sum_{S \subseteq T}\Pr_p(N_S) \Delta(S,T),
\end{align*}
where we have put $\Pr[N'=T\mid N=S]=\Delta(S,T)$. We note for later use that for any $S\subseteq [n]$, it holds that $\sum_{T\supseteq S} \Delta(S,T)=1$. Indeed, conditioned on $N=S$, it holds that $S\subseteq N'$ with probability 1.
Combining this with~\eqref{eq:ngbh_expansion}, we can write the probability of MPD exceeding size $t$ on the graphs parameterized by $p'$ as
\begin{align}\label{eq:big_bound}
    \Pr(A_0(I_{p',q}) \geq t) =&  \Pr_{p'}(N_\emptyset) \Pr(A_0(I_{p', q^*}) \geq t) + \sum_{T \subseteq [n], T \neq \emptyset} \Pr_{p'}(N_T) \Pr(A_0(I_{p'(A_0, T), q^*}) \geq t-1) \nonumber \\ 
    =& \Pr_{p'}(N_\emptyset) \Pr(A_0(I_{p',q^*}) \geq t) + \nonumber\\
    & \sum_{T \subseteq [n], T \neq \emptyset} \sum_{S \subseteq T} \Pr_{p}(N_S)\Delta(S,T) \Pr(A_0(I_{p'(A_0, T), q^*}) \geq t-1) \nonumber \\
    =& \Pr_{p}(N_\emptyset) \Delta(\emptyset,\emptyset) \Pr(A_0(I_{p', q^*}) \geq t) + \nonumber\\
    & \Pr_p(N_\emptyset) \sum_{T \subseteq [n], T \neq \emptyset} \Delta(\emptyset,T) \Pr(A_0(I_{p'(A_0, T), q^*}) \geq t-1) + \nonumber \\
    & \sum_{S \subseteq [n], S \neq \emptyset}  \Pr_{p}(N_S) \sum_{T \supseteq S} \Delta(S,T) \Pr(A_0(I_{p'(A_0, T), q^*}) \geq t-1),
\end{align}
where the final steps follows by interchanging summations in the second term, and splitting into the cases $S=\emptyset$ and $S\neq \emptyset$.

Note that if $S$ is non-empty and $S\subseteq T$, then by the MPD rule, $p'(A_0, S) \preceq p'(A_0, T)$ as the minimum degree within the set of neighbors cannot be larger in $T$ than in $S$. Moreover, it is readily checked that the assumption $p\preceq p'$ implies that $p(A_0, S) \preceq p'(A_0, S)$ (after an appropriate permutation). Using the induction hypothesis, we get that for $S\neq \emptyset$ and $T\supseteq S$,
\begin{align}\label{eq:inductive1}
    \Pr(A_0(I_{p'(A_0, T), q^*}) \geq t-1) \geq \Pr(A_0(I_{p(A_0, S), q^*}) \geq t-1),
\end{align}
and that 
\begin{align}\label{eq:inductive2}
\Pr(A_0(I_{p',q^*})\geq t)\geq \Pr(A_0(I_{p,q^*})\geq t).
\end{align}
Moreover, an application of Lemma~\ref{lem:greedy} and the induction hypothesis gives that
\begin{align}\label{eq:the_lemma}
    \Pr(A_0(I_{p'(A_0, T), q^*}) \geq t-1)\geq \Pr(A_0(I_{p', q^*}) \geq t)\geq \Pr(A_0(I_{p, q^*}) \geq t).   
\end{align}
Plugging the bounds of~\eqref{eq:inductive1},\eqref{eq:inductive2}, and \eqref{eq:the_lemma} into~\eqref{eq:big_bound}, it follows that
\begin{align*}
    \Pr(A_0(I_{p', q}) \geq t) \geq& \Pr_{p}(N_\emptyset) \Delta(\emptyset,\emptyset) \Pr(A_0(I_{p, q^*}) \geq t) +\\
    & \Pr_p(N_\emptyset) \sum_{T \subseteq [n], T \neq \emptyset} \Delta(\emptyset,T) \Pr(A_0(I_{p, q^*}) \geq t) + \\
    & \sum_{S \subseteq [n], S \neq \emptyset}  \Pr_{p}(N_S) \sum_{T \supseteq S}\Delta(S,T) \Pr(A_0(I_{p(A_0, S), q^*}) \geq t-1).
\end{align*}
Now combining the first two terms above and using that  $\sum _{T\supseteq S}\Delta(S,T)=1$ for any $S\subseteq [n]$, we obtain that
\begin{align*}
    \Pr(A_0(I_{p', q}) \geq t)\geq& \Pr_{p}(N_\emptyset)\Pr(A_0(I_{p, q^*}) \geq t)+\sum_{S \subseteq [n], S \neq \emptyset}  \Pr_{p}(N_S)\Pr(A_0(I_{p(A_0, S), q^*}) \geq t-1)\\
    =&\Pr(A_0(I_{p,q})\geq t),
\end{align*}
where the final equality follows from~\eqref{eq:ngbh_expansion}. This is the desired result.
\end{proof}

\section{The Proof of Theorem~\ref{thm:mpd-opt}}\label{app:mpd-opt}
In this appendix, we provide the proof of Theorem~\ref{thm:mpd-opt}. As in Appendix~\ref{app:domination}, for subsets of offline nodes $S \subseteq [n]$, we let $\Pr_p(N_S)$ denote the probability that the first online node's neighborhood is exactly the set $S$ in $I_{p,q}$.
\begin{proof}[Proof of Theorem~\ref{thm:mpd-opt}]
We will prove the theorem by induction over $m$. 
For the base case, $m = 0$, the inequality holds with equality as both algorithms yield empty matchings.

For the inductive case, let $m > 0$ and assume the inequality holds with $m - 1$ online nodes. Let $q^*=(q_2,\dots,q_m)$.
Let $R_A \subseteq \mathcal{P}([n])$ be set of subsets of $[n]$ s.t.\ for any $S \in R_A$, $p(A, S) = p$, i.e.\ algorithm $A$ matches no edges if the set of neighbors of the current online node is $S$.
Consider the probability of algorithm $A$ attaining a matching with size at least $t$. By the induction hypothesis,
\begin{align*}
    \Pr(A(I_{p, q})\geq t) &= \sum_{S \in R_A} \Pr_p(N_S) \Pr(A(I_{p,q^*}) \geq t) + \sum_{S \subseteq [n], S \notin R_A} \Pr_p(N_S) \Pr(A(I_{p(A, S), q^*}) \geq t-1) \\
    &\leq \sum_{S \in R_A} \Pr_p(N_S) \Pr(A_0(I_{p, q^*}) \geq t) + \sum_{S \subseteq [n], S \notin R_A} \Pr_p(N_S) \Pr(A_0(I_{p(A, S), q^*}) \geq t-1).
\end{align*}
Now if $S\in R_A$, then $p(A,S)\preceq p(A_0,S)$. Thus, by applying Lemma~\ref{lem:greedy} to the first term and Lemma \ref{lem:domination} to the second term above,
\begin{align*}
    \Pr(A(I_{p, q})\geq t) \leq& \Pr_p(N_\emptyset) \Pr(A_0(I_{p, q^*}) \geq t) + \sum_{S \in R_A \setminus \{\emptyset\}} \Pr_p(N_S) \Pr(A_0(I_{p(A_0, S), q^*}) \geq t-1) +\\
    &\sum_{S \subseteq [n], S \notin R_A} \Pr_p(N_S) \Pr(A_0(I_{p(A_0, S), q^*}) \geq t-1) \\
    =& \Pr_p(N_\emptyset) \Pr(A_0(I_{p, q^*}) \geq t) + \sum_{S \subseteq[n], S \neq \{\emptyset\}} \Pr_p(N_S) \Pr(A_0(I_{p(A_0, S), q^*}) \geq t-1) \\
    =& \Pr(A_0(I_{p, q^*}\geq t),
\end{align*}
completing the proof.
\end{proof}

\section{MPD with Noisy Predictions}
\label{appendix-noisy}
In Section~\ref{sec-opt}, we show that given expected degrees as predictions, MPD is the optimal algorithm on \GraphModel{} graphs. Here, we extend that analysis to show that even if the predictor is noisy, MPD can still return a large matching.

Consider two degree predictors $\sigma$ and $\sigma'$ (or more generally, two orderings of the offline nodes). Let $\Delta(\sigma, \sigma')$ denote the minimum number of offline nodes which must be removed such that $\sigma$ and $\sigma'$ induce the same ordering over the remaining nodes.

\begin{theorem}
For any graph $G = (U \cup V, E)$, the matching returned by MPD with degree predictor $\sigma$ has at most $\Delta(\sigma, \sigma')$ more edges than the matching returned by MPD with degree predictor $\sigma'$.
\label{thm-LISerror}
\end{theorem}

On \GraphModel{} graphs, let $p[\sigma]$ be the array of offline weights ordered by $\sigma$. So, if $\sigma^*$ returns the expected degrees, $p[\sigma^*]$ is in sorted order. Let $\LIS(p)$ denote the size of the longest increasing subsequence of $p$. Note that $\Delta(\sigma^*, \sigma) = n - \LIS(p)$. Then, as a corollary to Theorem~\ref{thm-LISerror}, we can bound the performance of a noisy degree predictor compared to the performance of MPD with expected degrees.
\begin{corollary}
On \GraphModel{} graphs, the expected size of the matching returned by MPD with degree predictor $\sigma$ has at most $n - LIS(p[\sigma])$ fewer nodes than the expected size of the matching returned by MPD given the expected degrees.
\end{corollary}
As MPD with expected degrees is the optimal online algorithm for \GraphModel{} graphs, this implies that as long as $n-LIS(p[\sigma])$ is small, MPD is still near-optimal.
Note that $n - LIS(p[\sigma])$ is equal to the minimum number of offline nodes that would need to be removed s.t.\ the offline weights are in sorted order. This quantity is clearly upper bounded by the number of mispredicted nodes as removing each mispredicted node will leave a remaining sequence which is sorted.

In order to prove the theorem, we will use the following lemmas.
% Lemma~\ref{lem:greedy} as well as the following lemma that removing an offline node will never increase the size of the matching produced by any permutation-based algorithm (MPD with an arbitrary degree predictor).
\begin{lemma}[Corollary of Lemma 2 of~\cite{birnbaum2008simple}]
\label{lem:removenode1}
Consider any graph $G = (U \cup V, E)$, offline node $u \in U$, and degree predictor $\sigma$. MPD with degree predictor $\sigma$ when run on $G' = (U \setminus \{u\} \cup V, E)$ will produce a matching with at most one fewer edge than when run on $G$.
\end{lemma}
This follows by~\cite{birnbaum2008simple} as these two matchings will differ by at most one alternating path, which can reduce the matching size by at most one.

\begin{lemma}
\label{lem:removenode2}
Consider any graph $G = (U \cup V, E)$, offline node $u \in U$, and degree predictor $\sigma$. MPD with degree predictor $\sigma$ when run on $G' = (U \setminus \{u\} \cup V, E)$ will produce a matching no bigger than when run on $G$.
\end{lemma}

\begin{proof}
Let $N_G(v)$ be the neighborhood of a node $v \in V$ in the graph $G$, only including \emph{unmatched} offline nodes at the point that $v$ arrives. We will prove via induction that for any online node $v$, $N_{G'}(v) \subseteq N_G(v)$.
For the base case, consider the first online node $v_1$.
In this case, if $u \notin N_G(v_1)$, $N_{G'}(v_1) = N_G(v_1)$. Otherwise $N_{G'}(v_1) = N_G(v_1) \setminus \{u\}$.

Now, consider the inductive case for some online node $v_i$ for $i > 1$. Assume that for all $j < i$, $N_{G'}(v_j) \subseteq N_{G}(v_j)$. Assume, for the sake of contradiction, that there exists some $u'$ s.t.\ $u' \in N_{G'}(v_i)$ but $u' \notin N_G(v_i)$.
This can only happen if at some point previously, $u'$ was matched by MPD running on $G$ but was never matched by MPD running on $G'$. Let $v'$ be the online node that matched with $u'$ when run on $G$. Note that as $u' \in N_{G'}(v_i)$, it must be the case that $u' \in N_{G'}(v')$ as MPD running on $G'$ has not yet matched $u'$ and $u'$ was in $N_G(v')$.
Let $u^*$ be the node that $v'$ matched with in $G'$.
As MPD had the choice to match $v'$ with $u'$ or $u^*$ on $G'$, it must be the case that $\sigma(u^*) < \sigma(u')$ (breaking ties arbitrarily but consistently).
On the other hand, by the inductive hypothesis, $u', u^* \in N_{G}(v')$, so MPD run on $G$ chose to match $v'$ with $u'$ over $u^*$, so $\sigma(u') < \sigma(u^*)$, and we have reached a contradiction.

To complete the proof, note that for MPD (or any greedy algorithm), the size of the matching is exactly $m$ minus the number of times an online node has no available neighbors. The inductive statement implies that for all online nodes $v$, $|N_{G'}(v)| \leq |N_G(v)|$. Therefore, the number of times an online node will have no available neighbors is at least as large for MPD when run on $G'$ as on $G$.
\end{proof}

\begin{proof}[Proof of Theorem~\ref{thm-LISerror}]
Let $S \subset U$ be the $\Delta(\sigma, \sigma')$ online nodes which, if removed, would leave the remaining offline weights in sorted order. Let $A_\sigma(G)$ refer to size of matching returned by MPD with degree predictor $\sigma$ on graph $G$. Let $G_{-S} = \left((U \setminus S) \cup V, E\right)$.

By repeated invocation of Lemma~\ref{lem:removenode2}, 
\[
    A_{\sigma'}(G) \geq A_{\sigma'}(G_{-S})
\]
as removing an offline node never increases the size of the matching returned by MPD.
By repeated invocation of Lemma~\ref{lem:removenode1}, 
\[
    A_\sigma(G) \leq A_\sigma(G_{-S}) + \Delta(\sigma, \sigma')
\]
as removing an offline node can decrease the size of the matching returned by MPD by at most $1$.

To combine these bounds, note that $A_\sigma(G_{-S}) = A_{\sigma'}(G_{-S})$ as $\sigma$ and $\sigma'$ induce the same ordering after removing the offline nodes in $S$.
Therefore,
\[
    A_{\sigma'}(G) \geq A_\sigma(G) - \Delta(\sigma, \sigma'),
\]
completing the proof.
\end{proof}

\section{Worst-Case Bound and Failure Modes}
\label{appendix-worst-case-bound}
In the worst-case, MinPredictedDegree achieves a competitive ratio of $1/2$.
When $\sigma$ gives arbitrary predictions, MinPredictedDegree is equivalent to the simple greedy algorithm for online bipartite matching which achieves a competitive ratio of $1/2$~\cite{karp1990optimal, mehta2013online}.
Even if $\sigma$ is the perfect degree predictor where $\sigma(u) = deg(u)$ for all $u \in U$, MinPredictedDegree is still $(1/2)$-competitive in the worst-case.

As MinPredictedDegree forms a \emph{maximal} matching, the matching it returns is always at least half the size of the maximum matching.
For the matching upper bound on the competitive ratio in the perfect predictor case, consider the graph $G = (U \cup V, E)$ where $n=m=6$ with the following adjacency list:
\begin{align*}
    \{u_1:&\; v_1, v_2, v_3\}, \\
    \{u_2:&\; v_1, v_2, v_3\}, \\
    \{u_3:&\; v_1, v_2, v_3\}, \\
    \{u_4:&\; v_1, v_4\}, \\
    \{u_5:&\; v_2, v_5\}, \\
    \{u_6:&\; v_3, v_6\}.
\end{align*}
The first half of the offline and online nodes form a complete bipartite subgraph while the second half of the offline nodes each connect to one online node in the first half and one online node in the second half.
MinPredictedDegree with a perfect degree predictor will return the matching $M = \{\{u_4, v_1\}, \{u_5, v_2\}, \{u_6, v_3\}\}$, matching the first half of the online nodes with their lower degree neighbors in the second half of the offline nodes, therefore leaving the first half of the offline nodes unmatched.

More generally, MinPredictedDegree can perform poorly if the degree predictor is arbitrary or if high degree nodes have poor edge expansion compared to low degree nodes (making it disadvantageous to always prioritize low degree nodes as seen in the example above).  
However, often these adversarial structures do not appear in practice and matching low degree nodes first leads to better results.
In fact, the hard instance given in the seminal paper by Karp, Vazirani, and Vazirani~\cite{karp1990optimal} that introduced the online bipartite matching problem and the Ranking algorithm relies on the fact that algorithms with no extra information on the graph (e.g.\ no degree predictions) must often mistakenly match high degree left nodes rather than low degree left nodes when given a choice. 

\paragraph{Better worst-case algorithms with degree predictions?} As demonstrated in~\cite{karp1990optimal}, no algorithm for the online matching problem has a competitive ratio better than $1-1/e$. It is however a natural question whether an algorithm with access to the offline degrees can obtain a better worst-case competitive ratio. We answer this question in the negative by modifying the example in~\cite{karp1990optimal} to show that no algorithm with knowledge of the true offline degrees has a competitive ratio of more than $1-1/e$. The hard example in~\cite{karp1990optimal} is an $n$ by $n$ bipartite graph with offline nodes $U=\{u_1,\dots,u_n\}$ and online nodes $V=\{v_1,\dots, v_n\}$. Node $v_i$ has an edge to each node in $\{u_j:i\leq j\leq n\}$. The maximum matching clearly has size $n$ but it is shown in~\cite{karp1990optimal} that if the online nodes arrive in a random order, then no algorithm matches more than $(1-1/e+o(1))n$ nodes in expectation. Note that given the true offline degrees as predictions, MPD will actually achieve the maximum matching for this example. 

We augment this result in a black box manner to construct a bad example even when the offline degrees are known to the algorithm. Assume with no loss of generality that $n$ is a square. Let $U_1,\dots, U_{\sqrt{n}}$ and $V_1,\dots,V_{\sqrt{n}}$ be disjoint vertex sets each of size $\sqrt{n}$. Let the offline nodes be $U=\bigcup_{i\leq\sqrt{n}}U_i$ and the online nodes be $V=\bigcup_{i\leq\sqrt{n}}V_i$. For each $1\leq i\leq \sqrt{n}-1$, we let $(U_i,V_i)$ form an instance of the hard graph from~\cite{karp1990optimal} (as described above) with $\sqrt{n}$ online/offline nodes. Moreover, $(U_{\sqrt{n}},V_{\sqrt{n}})$ form a complete bipartite graph. Finally, for each offline node $u\in U$, we add $\sqrt{n} - deg(u)$ edges from $u$ to $V_{\sqrt{n}}$. Then all of the offline degrees are the same (they are all $\sqrt{n}$), so the degree oracle is of no use. If we sequentially for $1\leq i\leq \sqrt{n}-1$ let the nodes of $V_i$ arrive in a random order (and finally the nodes from $V_{\sqrt{n}}$ in any order), it follows from the result in~\cite{karp1990optimal} that the expected size of the produced matching is at most
$$
(\sqrt{n}-1)(1-1/e+o(1))\sqrt{n}+\sqrt{n}=n(1-1/e+o(1))
$$
As the maximum matching has size $n$, the upper bound on the competitive ratio follows. 

\section{Additional Competitive Ratio Results}
\label{appendix-compratiofig}

\begin{figure}[h!]
\begin{center}
\centerline{\includegraphics[width=0.4\columnwidth]{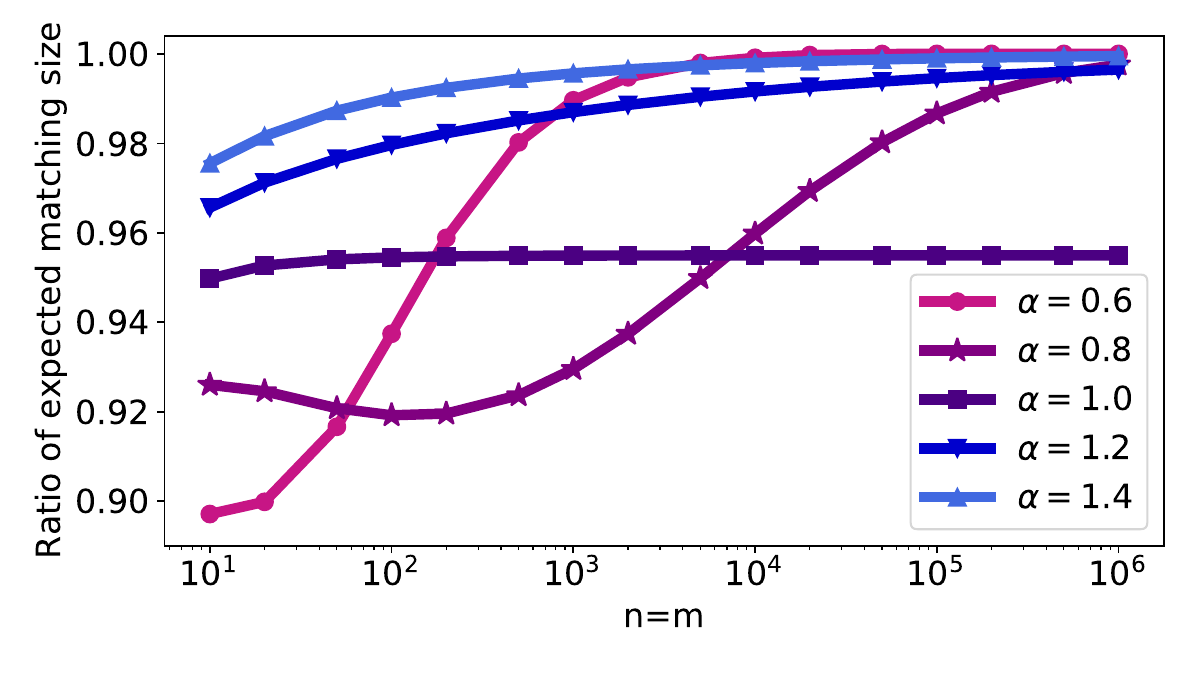}}
\caption{Ratio of MPD's expected matching size and the expected maximum matching size on symmetric \GraphModel{} random graphs with offline expected degrees following Zipf's Law with exponent $\alpha$. The $i$th offline node has expected degree $d_i = C i^{-\alpha}$ with $C = m/2$.}
\label{fig-analysis-zipf}
\end{center}
\end{figure}

Using the equations derived in Section~\ref{sec-analysis}, in Figure~\ref{fig-analysis-zipf} we plot the competitive ratio for symmetric \GraphModel{} graphs with expected offline degrees following Zipf's Law (at $n=m=1000$, this setup corresponds to the experiment shown in Figure~\ref{fig-zipf}).
Across all choices of the exponent other than $\alpha=1$, MPD's performance relative to the size of the maximum matching increases as the size of the graph grows.
Further, for larger graphs, in many settings MPD achieves a ratio close to $1$ and even for smaller graph achieves ratios above $0.9$ for almost all settings.

Notably, at $\alpha=1$, the ratio plateaus around $0.955$, and at $\alpha=0.8$, the ratio decreases from $n=10$ to $n=100$ before rising again as $n$ increases.
At $\alpha = 1$ across all values of $n$ as well as at $\alpha=0.8$ with $n=100$, a large fraction of the offline nodes have expected degree close to one.
Many of these nodes will have actual degree $1$ and many will have actual degree $\geq 2$.
MPD has no way of distinguishing between these two types of nodes as it only uses expected degrees and will mistakenly not match some offline nodes that only appear once.
While there is always a discrepancy between actual and expected degrees, the issue of prioritizing a node with actual degree $\geq 2$ over a node with actual degree $1$ is most detrimental, leading to worse performance when there are many offline nodes with expected degree close to one.

\section{Solution to System of Differential Equations in Equation~\ref{eq-diffeq}}
\label{appendix-diffeqsoln}
Recall the system of differential equations from Equation~\ref{eq-diffeq}:
\begin{equation}
    \frac{dz_d(t)}{dt} = k_d \left(1 - e^{z_d(t)}\right) \prod_{d' < d} e^{z_{d'}(t)}
\end{equation}
for all unique expected degrees $d$ in $\mathbf{d}$.

The solution to the system of differential equations is given by the following equations.
Let $\{\delta_i\}_{i=1}^\ell$ be the ordered set of unique expected degrees and let $f_d$ be the number of offline nodes with expected degree $d$.
We will define the auxiliary functions $\alpha_{\delta_i}(t)$ for $i \in \{2,\ldots,\ell\}$ and variables $C_{\delta_i}$ for $i=\{1,\ldots,\ell\}$ as follows:
\begin{align*}
    \alpha_{\delta_2}(t) &= C_{\delta_1} + e^{k_{\delta_1} t}\\
    \alpha_{\delta_i}(t) &= (\alpha_{\delta_{i-1}}(t))^{k_{\delta_{i-1}}/k_{\delta_{i-2}}} + C_{\delta_{i-1}} \text{ (for $i \geq 3$)} 
\end{align*}
where 
\begin{align*}
    C_{\delta_1} &= e^{k_{\delta_1} f_{\delta_1}} - 1 \\
    C_{\delta_i} &= (\alpha_{\delta_i}(0))^{k_{\delta_i}/k_{\delta_{i-1}}} (e^{k_{\delta_i} f_{\delta_i}} - 1) \text{ (for $i \geq 2$)}.
\end{align*}
Then,
\begin{equation}
\label{eq-diffeqsoln}
\begin{split}
    z_{\delta_1}(t) &= -\log(C_{\delta_1} e^{-k_{\delta_1} t} + 1) \\
    z_{\delta_i}(t) &= -\log(C_{\delta_i} (\alpha_{\delta_i}(t))^{-k_{\delta_i}/k_{\delta_{i-1}}} + 1) \text{ (for $i \geq 2$)}.
\end{split}
\end{equation}

In the rest of this section, we show that Equation~\ref{eq-diffeqsoln} give the correct solutions to the differential equations in Equation~\ref{eq-diffeq}.

\begin{lemma}
\label{lemma-diffeq}
For $i \in \{2,3,\ldots, \ell\}$,
\begin{equation}
    \frac{\frac{d}{dt} \left(\alpha_{\delta_i}(t)\right)}{k_{\delta_{i-1}} \alpha_{\delta_i}(t)}  = \prod_{j=1}^{i-1} e^{z_{\delta_j}(t)}.
\end{equation}
\end{lemma}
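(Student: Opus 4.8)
The plan is to prove the identity by induction on $i$, simply unwinding the recursive definitions of $\alpha_{\delta_i}$ and $z_{\delta_i}$; the constants $C_{\delta_i}$ never need to be expanded, only the structural closed forms of the $z_{\delta_i}$'s are used.

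\textbf{Base case $i=2$.} Here $\alpha_{\delta_2}(t) = C_{\delta_1} + e^{k_{\delta_1} t}$, so $\frac{d}{dt}\alpha_{\delta_2}(t) = k_{\delta_1} e^{k_{\delta_1} t}$ and the left-hand side equals $\frac{e^{k_{\delta_1} t}}{C_{\delta_1} + e^{k_{\delta_1} t}} = \frac{1}{C_{\delta_1} e^{-k_{\delta_1} t} + 1}$, which is exactly $e^{z_{\delta_1}(t)}$ by the closed form for $z_{\delta_1}$. This is the single-factor product $\prod_{j=1}^{1} e^{z_{\delta_j}(t)}$, so the base case holds.

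\textbf{Inductive step.} Assume the claim for index $i$ and consider $\alpha_{\delta_{i+1}}(t) = (\alpha_{\delta_i}(t))^{k_{\delta_i}/k_{\delta_{i-1}}} + C_{\delta_i}$. Writing $r = k_{\delta_i}/k_{\delta_{i-1}}$, the chain rule gives $\frac{d}{dt}\alpha_{\delta_{i+1}}(t) = r(\alpha_{\delta_i}(t))^{r-1}\frac{d}{dt}\alpha_{\delta_i}(t)$. Substituting the inductive hypothesis rewritten as $\frac{d}{dt}\alpha_{\delta_i}(t) = k_{\delta_{i-1}}\,\alpha_{\delta_i}(t)\prod_{j=1}^{i-1}e^{z_{\delta_j}(t)}$ and using $r\,k_{\delta_{i-1}} = k_{\delta_i}$, I obtain $\frac{d}{dt}\alpha_{\delta_{i+1}}(t) = k_{\delta_i}\,(\alpha_{\delta_i}(t))^{r}\prod_{j=1}^{i-1}e^{z_{\delta_j}(t)}$. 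Dividing by $k_{\delta_i}\,\alpha_{\delta_{i+1}}(t) = k_{\delta_i}\big((\alpha_{\delta_i}(t))^{r} + C_{\delta_i}\big)$ yields the extra factor $\frac{(\alpha_{\delta_i}(t))^{r}}{(\alpha_{\delta_i}(t))^{r} + C_{\delta_i}} = \frac{1}{1 + C_{\delta_i}(\alpha_{\delta_i}(t))^{-k_{\delta_i}/k_{\delta_{i-1}}}}$, which is precisely $e^{z_{\delta_i}(t)}$ by the closed form for $z_{\delta_i}$. Multiplying against $\prod_{j=1}^{i-1}e^{z_{\delta_j}(t)}$ gives $\prod_{j=1}^{i}e^{z_{\delta_j}(t)}$, completing the induction.

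There is no real obstacle: this is a bookkeeping computation. The only points requiring mild care are (i) pulling a factor of $\alpha_{\delta_i}(t)$ out of the inductive hypothesis so the exponent $r-1$ becomes $r$ and the cancellation $r\,k_{\delta_{i-1}} = k_{\delta_i}$ is clean, and (ii) recognizing the ratio $(\alpha_{\delta_i})^{r}/\big((\alpha_{\delta_i})^{r} + C_{\delta_i}\big)$ as the reciprocal of $1 + C_{\delta_i}(\alpha_{\delta_i})^{-r}$, i.e.\ as $e^{z_{\delta_i}}$. This lemma is then exactly what is needed to verify directly that Equation~\ref{eq-diffeqsoln} solves the system~\ref{eq-diffeq}, since $\frac{dz_{\delta_i}(t)}{dt} = -\frac{d}{dt}\log\!\big(C_{\delta_i}(\alpha_{\delta_i}(t))^{-k_{\delta_i}/k_{\delta_{i-1}}} + 1\big)$ reduces, after applying the lemma, to $k_{\delta_i}(1 - e^{z_{\delta_i}(t)})\prod_{d' < \delta_i} e^{z_{d'}(t)}$.
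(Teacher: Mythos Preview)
Your proof is correct and essentially identical to the paper's own argument: both proceed by induction on $i$, handle the base case $i=2$ by direct computation, and in the inductive step apply the chain rule to $\alpha_{\delta_{i+1}} = (\alpha_{\delta_i})^{k_{\delta_i}/k_{\delta_{i-1}}} + C_{\delta_i}$, substitute the inductive hypothesis, and recognize the resulting ratio as $e^{z_{\delta_i}}$. The only cosmetic difference is that the paper indexes the inductive step as ``assume $i-1$, prove $i$'' while you use ``assume $i$, prove $i+1$.''
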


\begin{proof}
We will prove the lemma by induction.
Consider the base case of $i = 2$:
\begin{equation*}
\begin{split}
     \frac{\frac{d}{dt} \left(\alpha_{\delta_2}(t)\right)}{k_{\delta_1} \alpha_{\delta_2}(t)} &= \frac{k_{\delta_1} e^{k_{\delta_1} t}}{k_{\delta_1} (C_{\delta_1} + e^{k_{\delta_1} t})} \\
     &= \frac{1}{C_{\delta_1} e^{-k_{\delta_1} t} + 1} \\
     &= e^{z_{\delta_1}(t)}.
\end{split}
\end{equation*}

Now consider the inductive case of $i > 2$ under the assumption that $\frac{\frac{d}{dt}(\alpha_{\delta_{i-1}})}{k_{\delta_{i-2}} \alpha_{\delta_{i-1}}} = \prod_{j=1}^{i-2} e^{z_{\delta_j}}$:
\begin{equation*}
\begin{split}
     \frac{\frac{d}{dt} \left(\alpha_{\delta_i}(t)\right)}{k_{\delta_{i-1}} \alpha_{\delta_i}(t)} &= \frac{\frac{k_{\delta_{i-1}}}{k_{\delta_{i-2}}} \alpha_{\delta_{i-1}}^{(k_{\delta_{i-1}}/k_{\delta_{i-2}})-1} \frac{d}{dt} (\alpha_{\delta_{i-1}})}{k_{\delta_{i-1}} (\alpha_{\delta_i-1}^{k_{\delta_{i-1}}/k_{\delta_{i-2}}} + C_{\delta_{i-1}})} \\
     &= \frac{\alpha_{\delta_{i-1}}^{k_{\delta_{i-1}}/k_{\delta_{i-2}}}}{\alpha_{\delta_i-1}^{k_{\delta_{i-1}}/k_{\delta_{i-2}}} + C_{\delta_{i-1}}} \cdot \frac{\frac{d}{dt}(\alpha_{\delta_{i-1}})}{k_{\delta_{i-2}} \alpha_{\delta_{i-1}}} \\
     &= \frac{1}{C_{\delta_{i-1}} \alpha_{\delta_{i-1}}^{-k_{\delta_{i-1}}/k_{\delta_{i-2}}} + 1} \cdot \prod_{j=1}^{i-2} e^{z_{\delta_j}(t)} \\
     &= \prod_{j=1}^{i-1} e^{z_{\delta_j}(t)}.
\end{split}
\end{equation*}
This completes the proof.
\end{proof}

\begin{lemma}
The expressions for $z_{\delta_i}(t)$ in Equation~\ref{eq-diffeqsoln} give a solution to system of differential equations in Equation~\ref{eq-diffeq} with initial conditions $z_{\delta_i}(0) = -k_{\delta_1} f_{\delta_i}$.
\end{lemma}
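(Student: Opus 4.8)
The statement has two parts to verify: that the claimed closed forms satisfy the prescribed initial conditions, and that they satisfy the differential equations of Equation~\ref{eq-diffeq}. Both are verifications rather than derivations, since the formulas are already written down, so the plan is simply to substitute at $t=0$ and to differentiate, using Lemma~\ref{lemma-diffeq} to handle the product $\prod_{d'<d} e^{z_{d'}(t)}$ that would otherwise couple all $\ell$ equations together.

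For the initial conditions I would evaluate each $z_{\delta_i}$ at $t=0$. For $i=1$, the definition $C_{\delta_1} = e^{k_{\delta_1} f_{\delta_1}} - 1$ gives $z_{\delta_1}(0) = -\log(C_{\delta_1}+1) = -k_{\delta_1} f_{\delta_1}$ immediately. For $i \geq 2$, the role of the normalizing factor in $C_{\delta_i} = (\alpha_{\delta_i}(0))^{k_{\delta_i}/k_{\delta_{i-1}}}(e^{k_{\delta_i} f_{\delta_i}}-1)$ is precisely that $C_{\delta_i}(\alpha_{\delta_i}(0))^{-k_{\delta_i}/k_{\delta_{i-1}}} = e^{k_{\delta_i} f_{\delta_i}} - 1$, so $z_{\delta_i}(0) = -\log(e^{k_{\delta_i} f_{\delta_i}}) = -k_{\delta_i} f_{\delta_i}$, which is the natural initial value coming from $Z_d^0 = -k_d Y_d^0 = -k_d f_d$.

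For the differential equations, the base case $i=1$ is a one-line computation: $e^{z_{\delta_1}(t)} = 1/(C_{\delta_1} e^{-k_{\delta_1} t} + 1)$, and differentiating the logarithm gives $\frac{dz_{\delta_1}}{dt} = k_{\delta_1} C_{\delta_1} e^{-k_{\delta_1} t}/(C_{\delta_1} e^{-k_{\delta_1} t}+1) = k_{\delta_1}(1 - e^{z_{\delta_1}(t)})$, which is Equation~\ref{eq-diffeq} with the empty product. For $i \geq 2$, write $w := C_{\delta_i}(\alpha_{\delta_i}(t))^{-k_{\delta_i}/k_{\delta_{i-1}}}$ so that $e^{z_{\delta_i}(t)} = 1/(w+1)$; applying the chain rule, with the exponent arithmetic $-k_{\delta_i}/k_{\delta_{i-1}} - 1$ from differentiating a power, one collects the result into $\frac{dz_{\delta_i}}{dt} = k_{\delta_i}\cdot \frac{w}{w+1}\cdot \frac{(d/dt)\alpha_{\delta_i}(t)}{k_{\delta_{i-1}}\alpha_{\delta_i}(t)}$. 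Then recognize $\frac{w}{w+1} = 1 - \frac{1}{w+1} = 1 - e^{z_{\delta_i}(t)}$, and substitute Lemma~\ref{lemma-diffeq} for the last factor to obtain $k_{\delta_i}(1 - e^{z_{\delta_i}(t)})\prod_{j=1}^{i-1} e^{z_{\delta_j}(t)}$, exactly the right-hand side of Equation~\ref{eq-diffeq}.

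There is no deep obstacle here; the only thing requiring care is the bookkeeping in the chain rule for $i \geq 2$: keeping the fractional exponents $k_{\delta_i}/k_{\delta_{i-1}}$ straight, tracking the extra $-1$ that comes from differentiating a power, and spotting the $\frac{w}{w+1} = 1 - e^{z_{\delta_i}}$ simplification so the expression matches the stated ODE. Crucially, the intimidating nested definitions of $\alpha_{\delta_i}$ and $C_{\delta_i}$ never have to be unwound, because Lemma~\ref{lemma-diffeq} has already absorbed the recursion into the clean identity for $(d/dt)\alpha_{\delta_i}/(k_{\delta_{i-1}}\alpha_{\delta_i})$.
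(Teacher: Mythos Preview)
Your proposal is correct and follows essentially the same approach as the paper: verify the initial conditions by direct substitution using the definitions of $C_{\delta_i}$, then verify the ODEs by differentiating the closed forms and invoking Lemma~\ref{lemma-diffeq} to collapse the recursive $\alpha$-structure into the product $\prod_{j<i} e^{z_{\delta_j}}$. Your introduction of the shorthand $w$ and the identification $w/(w+1) = 1 - e^{z_{\delta_i}}$ is a slightly cleaner way to organize the same chain-rule computation the paper carries out.
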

\begin{proof}
We will split the proof into two cases for $\delta_1$ and for $\delta_i$ with $i \geq 2$.
Starting with $i=1$, recall that 
\[
    z_{\delta_1}(t) =-\log(C_{\delta_1} e^{-k_{\delta_1} t} + 1).
\]
First, we will show that this function has the correct derivative.
\begin{equation*}
\begin{split}
    \frac{dz_{\delta_1}(t)}{dt} &= - \frac{1}{C_{\delta_1} e^{-k_{\delta_1} t} + 1} (C_{\delta_1}) (-k_{\delta_1}) e^{-k \delta_1 t} \\
    &= k_{\delta_1} \frac{C_{\delta_1} e^{-k_{\delta_1} t}}{C_{\delta_1} e^{-k_{\delta_1} t} + 1} \\
    &= k_{\delta_1} \left(1 - e^{z_{\delta_1}(t)}\right)
\end{split}
\end{equation*}
It remains to be shown that $z_{\delta_1}(0) = - k_{\delta_1} f_{\delta_1}$:
\begin{equation*}
\begin{split}
    z_{\delta_1}(0) &=-\log(C_{\delta_1} + 1) \\
    &= -\log(e^{k_{\delta_1} f_{\delta_1}}) \\
    &= - k_{\delta_1} f_{\delta_1}.
\end{split}    
\end{equation*}

Now, consider the case where $i \geq 2$.
Then,
\begin{equation*}
\begin{split}
    \frac{dz_{\delta_i}(t)}{dt} &= \frac{d}{dt}\left(-\log(C_{\delta_i} (\alpha_{\delta_i}(t))^{-k_{\delta_i}/k_{\delta_{i-1}}} + 1)\right) \\
    &= -\frac{C_{\delta_i} \frac{-k_{\delta_i}}{k_{\delta_{i-1}}} (\alpha_{\delta_i}(t))^{-k_{\delta_i}/k_{\delta_{i-1}}-1}}{C_{\delta_i} (\alpha_{\delta_i}(t))^{-k_{\delta_i}/k_{\delta_{i-1}}} + 1} \frac{d}{dt} \left(\alpha_{\delta_i}(t)\right) \\
    &= k_{\delta_i} \frac{C_{\delta_i} (\alpha_{\delta_i}(t))^{-k_{\delta_i}/k_{\delta_{i-1}}}}{C_{\delta_i} (\alpha_{\delta_i}(t))^{-k_{\delta_i}/k_{\delta_{i-1}}} + 1} \frac{\frac{d}{dt} \left(\alpha_{\delta_i}(t)\right)}{k_{\delta_{i-1}} \alpha_{\delta_i}(t)} \\
    &= k_{\delta_i} \left(1 - e^{z_{\delta_i}(t)}\right) \prod_{j=1}^{i-1} e^{z_{\delta_j(t)}}.
\end{split}
\end{equation*}
The last step makes use of Lemma~\ref{lemma-diffeq}.
Finally, we must show that $z_{\delta_i}(0) = -k_{\delta_i} f_{\delta_i}$:
\begin{equation*}
\begin{split}
    z_{\delta_i}(0) &= -\log(C_{\delta_i} \alpha_{\delta_i}(0)^{-k_{\delta_i}/k_{\delta_{i-1}}} + 1) \\
    &= -\log(e^{k_{\delta_i} f_{\delta_i}} - 1 + 1) \\
    &= -k_{\delta_i} f_{\delta_i}.
\end{split}
\end{equation*}
Thus, the given solution to the system of differential equations is correct.
\end{proof}

\section{Proof of Theorem~\ref{thm-mpd-expectation}}
\label{appendix-mpd-expectation}
We give the proof of Theorem~\ref{thm-mpd-expectation} which states that the solution to the differential equations models the size of the matching returned by MinPredictedDegree.
\begin{proof}[Proof of Theorem~\ref{thm-mpd-expectation}]
The proof of Theorem~\ref{thm-mpd-expectation} follows a direct application of Theorem 1 in Luby et al.~\cite{luby2001efficient}.
We must show three conditions are satisfied: (i) that $|Z_d^{t+1} - Z_d^t|$ is bounded, (ii) that $\frac{dz_d(t)}{dt} = \E[Z_d^{t+1} - Z_d^t | H_t]$ (where $H_t$ is the history up to time $t$, and (iii) that $\frac{dz_d(t)}{dt}$ satisfies a Lipschitz condition when $z_d(t) \leq 0$ (recall that $Z_d^t$ is always nonpositive as $Y_d^t$ is always nonnegative).

If these conditions hold, then the solution to the system of differential equations gives the asymptotic expected behavior of the variables $Z_d^t$ and thus the asymptotic expected behavior of the variables $Y_d^t$, which govern the size of the matching returned by MinPredictedDegree.
% In addition, we directly get the desired concentration bound which tells us in the non-asymptotic case that $Y_d^t$ does not deviate too far from the predicted behavior from the differential equations:
% \begin{equation}
%     \Pr(Y_d > -z_{\delta_i}(m)/k + cm^{5/6}) < \ell m^{2/3} \exp(- m^{1/3}/2).
% \end{equation}

It remains to show that these three conditions are met.
Condition (i) is satisfied as the number of nodes of a given expected degree can change by at most one per timestep, so $|Z_d^{t+1} - Z_d^t| \leq k_d$.
Condition (ii) is satisfied by construction in Equation~\ref{eq-diffeq}.
Finally, Condition (iii) is satisfied as $\frac{dz_d(t)}{dt}$ is comprised of a product of several terms resembling $C_1 \cdot e^{-C_2 x}$ for nonnegative constants $C_1, C_2$ and with $x$ nonnegative. Therefore, $\frac{dz_d(t)}{dt}$ has constant bounded first derivatives when $z_d(t) \leq 0$.
\end{proof}

\section{Upper Bound on Expected Maximum Matching Size}
\label{appendix-offline-expectation}
\subsection{Overview}
To analyze the maximum matching size within this model, we rely on a upper bound based on the matching version of Hall's marriage theorem~\cite{hall1935onrepresentatives}.
We first state the classic theorem.
\begin{theorem}[Hall's Theorem]
Let $G=(U \cup V, E)$ be a bipartite graph.
For any subset of nodes $S$, let $N(S)$ be the set of neighbors of the nodes in $X$.
Then, $G$ has a perfect matching if and only if for all $S \subset U$ and $T \subset V$, $|S| \leq |N(S)|$ and $|T| \leq |N(T)|$.
\end{theorem}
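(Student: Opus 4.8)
The plan is to prove the two directions separately, with the ``only if'' direction being immediate and the ``if'' direction by induction on $|U|$; throughout I read $N(S)$ as the neighborhood $\bigcup_{u \in S} N(u)$. For the ``only if'' direction, suppose $G$ has a perfect matching $M$. Restricting $M$ to any $S \subseteq U$ gives an injection of $S$ into $V$ whose image lies inside $N(S)$, so $|S| \le |N(S)|$, and symmetrically $|T| \le |N(T)|$ for every $T \subseteq V$. (Taking $S = U$ and $T = V$ forces $|U| = |V|$, so ``perfect matching'' is well-posed; in particular, for the converse it suffices to produce a matching saturating $U$.)

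For the ``if'' direction I would induct on $n = |U|$, the case $n = 0$ being trivial, and split into two cases according to whether Hall's inequality is ever tight. \emph{Case 1:} every nonempty proper $S \subsetneq U$ satisfies $|N(S)| \ge |S| + 1$. Since $|N(\{u\})| \ge 1$ for each $u$, there is at least one edge $\{u, v\}$; add it to the matching and delete $u$ from $U$ and $v$ from $V$. In the resulting graph $G'$, any $S \subseteq U \setminus \{u\}$ loses at most the single vertex $v$ from its neighborhood, so $|N_{G'}(S)| \ge |N(S)| - 1 \ge |S|$; Hall's condition survives, and the inductive hypothesis supplies a matching saturating $U \setminus \{u\}$, which together with $\{u,v\}$ saturates $U$.

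\emph{Case 2:} some nonempty proper $S_0 \subsetneq U$ has $|N(S_0)| = |S_0|$. Applying the inductive hypothesis to the subgraph on $S_0 \cup N(S_0)$ (which satisfies Hall's condition, inherited from $G$) yields a matching $M_0$ saturating $S_0$ and using up exactly the vertices of $N(S_0)$. Now consider the subgraph on $(U \setminus S_0) \cup (V \setminus N(S_0))$: for any $T \subseteq U \setminus S_0$, its neighbors outside $N(S_0)$ number $|N(T \cup S_0)| - |N(S_0)| \ge |T \cup S_0| - |S_0| = |T|$, using $|N(T \cup S_0)| \ge |T \cup S_0|$ and the tightness at $S_0$. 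So Hall's condition holds here too, and induction gives a matching $M_1$, disjoint from $M_0$, saturating $U \setminus S_0$; then $M_0 \cup M_1$ saturates $U$ and hence is perfect.

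The only delicate point — the place I would expect to spend the most care — is the bookkeeping in Case 2: checking that passing to $V \setminus N(S_0)$ leaves Hall's condition intact for $U \setminus S_0$, which rests on the identity $N(T \cup S_0) = N(T) \cup N(S_0)$ combined with $|N(S_0)| = |S_0|$; everything else is routine. As a fallback I would instead derive the statement from König's theorem (max matching $=$ min vertex cover in bipartite graphs), or from max-flow/min-cut on the unit-capacity network with a super-source into $U$ and a super-sink out of $V$, which sidesteps the case analysis at the cost of importing a heavier black box.
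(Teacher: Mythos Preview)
Your proof is correct and is in fact the standard inductive argument for Hall's theorem: the two-case split on whether a proper tight set exists, together with the neighborhood bookkeeping in Case~2, is exactly right, and your remark about deriving it alternatively from K\"onig or max-flow/min-cut is also accurate.

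However, the paper does not actually prove this theorem. It is stated as a classical result (with citation to \cite{hall1935onrepresentatives}) and then immediately used as a black box to derive the bound $\mu(G) \le n - (|S| - |N(S)|)$ in Equation~\eqref{eq-offline-upper-bound}. So there is no ``paper's own proof'' to compare against; you have supplied a proof where the authors simply invoked the literature. If the intent was to match the paper, you could replace your argument with a one-line citation; if the intent was to make the exposition self-contained, your write-up does the job.
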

Intuitively, if there is any subset $S$ with few neighbors, then only $|N(S)|$ of the members of $S$ can possibly be matched.
Let $\mu(G)$ correspond to the size of the maximum matching in $G$.
For any bipartite graph $G=(U \cup V, E)$ and $S \subset U$,
\begin{equation}
\label{eq-offline-upper-bound}
    \mu(G) \leq n - (|S| - |N(S)|)
\end{equation}
as out of all of the nodes in $S$, only $|N(S)|$ can be matched.
Therefore, if we can calculate the expected size of $|S|$ and $|N(S)|$ for some subset of a random symmetric \GraphModel{} graph $G$, we immediately get an upper bound on the expected size of the maximum matching in $G$.

Our upper bound involves constructing a specific subset $S^*$ of the offline nodes that makes use of our focus on power law graphs to provide a useful bound that is easy to evaluate.
We empirically test how good of a bound Equation~\ref{eq-offline-upper-bound} gives using $S^*$ and find that for symmetric \GraphModel{} random graphs with offline degrees following a power law distribution, the upper bound on $\mu(G)$ given by $n - (|S^*| - |N(S^*)|)$ is close to maximum matching size (often achieving the same value and in all trials was less than $2\%$ greater than the true value).
In addition to providing a good bound, we show that we can evaluate $\E[|S|]$ and $\E[|N(S)|]$ in Equations~\ref{eq-offline-ns},~\ref{eq-offline-s},~\ref{eq-offline-s2}.
By linearity of expectation, this directly gives us an upper bound on the expected size of the maximum matching.

In Appendix~\ref{appendix-offline-concentration}, we show that on symmetric \GraphModel{} random graphs with power law distributed degrees, our bound on the maximum matching size is concentrated about its expectation.
Combined with Theorem~\ref{thm-mpd-concentration}, this implies that our analytic results on the ratio of the expected sizes in Section~\ref{sec-analysis} are closely related to the competitive ratio.

\subsection{Construction}
Let $S^*$ be the subset of $U$ constructed as follows.
Let $U_1$ be the set of degree 1 nodes in $U$ (here degree 1 referring to the actual degree of the node rather than the expected degree in the \GraphModel{} model).
Then, $S^*$ is the maximal set of nodes in $U$ s.t. $N(S^*) \subseteq N(U_1)$.
In other words, $S^*$ is the maximal subset of nodes in $U$ whose neighbors completely overlap with the neighbors of the degree 1 nodes of $U$.

% From Equation~\ref{eq-offline-upper-bound}, the size of the sets $S^*$ and $N(S^*)$ can be used to upper bound the the maximum matching size. Empirically, we find that for \GraphModel{} random graphs with offline degrees following a power law distribution and with $n=m$, the upper bound on $\mu(G)$ given by $n - (|S^*| - |N(S^*)|)$ is close to maximum matching size (often achieving the same value and in all trials was less than $2\%$ greater than the true value).

As $n - (|S^*| - |N(S^*)|)$ gives an upper bound on the maximum matching size, $\E[n - (|S^*| - |N(S^*)|)]$ gives an upper bound on the expected maximum matching size.
The expected sizes of $S^*$ and $N(S^*)$ in a \GraphModel{} random graph with expected degrees $\mathbf{d}$ are given by the following equations.
The expected size of $N(S^*)$ is simply the sum over all online nodes $v \in V$ of the probability that $v$ has at least one degree $1$ neighbor.
The expected size of $S^*$ is broken down as the sum over all offline nodes $u \in U$ of the probability that $u$ has actual degree $\Delta$ and then the probability that all $\Delta$ of $u$'s neighbors are members of $N(S^*)$.
Let $S^*_\Delta$ be the subset of nodes in $S^*$ whose actual (as opposed to expected) degree are $\Delta$ and let $\beta^\Delta_i$ for $\Delta \in \{0,...,m\}$ and $i \in \{1,...,n\}$ be defined as
\begin{equation*}
   \beta^0_i = \beta^1_i = 1 \\ 
\end{equation*}
\begin{equation*}
\begin{split}
   \beta^\Delta_i &= 1 + \sum_{r=1}^\Delta (-1)^r \binom{\Delta}{r} \prod_{i'\neq i}\left[1 - r \left(\frac{d_{i'}}{m}\right) \left(1 - \frac{d_{i'}}{m}\right)^{m-1}\right]\\
   &(\text{for } \Delta \geq 2).
\end{split}
\end{equation*}
$\beta_i^\Delta$ represents the probability of $u_i \in S$ conditioned on $u_i$ having actual degree $\Delta$.
Then,
\begin{equation}
\label{eq-offline-ns}
    \E[|N(S^*)|] = m\left(1 -  \prod_{i=1}^n \left[1 - \frac{d_i}{m} \left(1 - \frac{d_i}{m}\right)^{m-1}\right] \right)
\end{equation}
and
\begin{equation}
\label{eq-offline-s}
    \E[|S^*|] = \sum_{\Delta=0}^m \E[|S^*_\Delta|] 
\end{equation}
where
\begin{equation}
\begin{split}
\label{eq-offline-s2}
    \E[|S^*_\Delta|] &= \sum_{i=1}^n \binom{m}{\Delta} \left(\frac{d_i}{m}\right)^\Delta \left(1 - \frac{d_i}{m}\right)^{m - \Delta} \beta^\Delta_i.
\end{split}
\end{equation}

In the rest of this section, we show that the equations for the expected size of $|S^*|$ and $|N(S^*)|$ are correct by showing their derivations.

First, consider $\E[|N(S^*)|]$.
Recall that $N(S^*)$ is the set of online nodes that have a neighbor with actual degree $1$.
Therefore, the expected size of $N(S^*)$ is $m$ minus the expected number of online nodes that have no degree one neighbors.
For any online node $v \in V$, the probability that $v$ has no degree 1 neighbors is
\begin{equation*}
\begin{split}
    \prod_{u \in U} &\Pr(\text{$u$ is not a deg 1 nbr of $v$}) \\
    &= \prod_{u \in U} [1 - \Pr(\text{$u$ nbr of $v$}) \Pr(\text{$u$ has no other nbrs})] \\
    &= \prod_{u \in U} \left[1 - \frac{d_u}{m} \left(1 - \frac{d_u}{m}\right)^{m-1}\right].
\end{split}
\end{equation*}
By linearity of expectation,
\[
    \E[|N(S)|] = m\left(1 - \prod_{u \in U} \left[1 - \frac{d_u}{m} \left(1 - \frac{d_u}{m}\right)^{m-1}\right]\right).
\]

Now, we will deal with $\E[|S^*_\Delta|]$. Recall that $S^*_\Delta$ is the set of offline nodes with actual degree $\Delta$ with all of their online neighbors having at least one offline neighbor with actual degree 1.
For a given offline node $u \in U$ with expected degree $d_u$, the probability of $u$ being in $S^*_\Delta$ is the product of the probability of $u$ having actual degree $\Delta$ and the conditional probability of all of $u$'s neighbors having a degree 1 neighbor given $u$ having actual degree $\Delta$.
We will call the first event $A_{u, \Delta}$ and the second, conditional event $B_{u, \Delta} | A_{u, \Delta}$.
The probability of $A_{u, \Delta}$ occurring corresponds to a Binomial random variable with size parameter $m$ and probability parameter $d_u/m$ taking on value $\Delta$:
\begin{equation*}
\begin{split}
    \Pr(A_{u, \Delta}) = \binom{m}{\Delta} \left(\frac{d_u}{m}\right)^\Delta  \left(1 - \frac{d_u}{m}\right)^{m - \Delta}.
\end{split}
\end{equation*}
The probability of $B_{u, \Delta} | A_{u, \Delta}$ equals 1 if $\Delta = 0$ or $\Delta = 1$ as either $u$ has no neighbors or $u$ is itself a degree 1 neighbor of its neighbors, respectively.
If $\Delta \geq 2$, then $\Pr(B_{u, \Delta} | A_{u, \Delta})$ is equal to the complement of the event that at least one of $u$'s neighbors has no degree 1 neighbor.
Let $C_{u, \Delta, r}|A_{u, \Delta}$ be the event that any subset of $r$ of $u$'s $\Delta$ neighbors have no degree 1 neighbor given $A_{u, \Delta}$.
$\Pr(C_{u, \Delta, r}|A_{u, \Delta})$ can be expressed as
\[
    \binom{\Delta}{r} \prod_{u' \neq u} \left[1 - r\left(\frac{d_{u'}}{m}\right)\left(1 - \frac{d_{u'}}{m}\right)^{m-1}\right]
\]
where the term within the product represents the probability of no offline nodes (excluding $u$) being degree one neighbors of a specific set of $r$ online nodes (similarly to when expressing $\E[|N(S^*)|]$ above).
By the inclusion-exclusion rule,
\[
    \Pr(B_{u, \Delta} | A_{u, \Delta}) = 1 + \sum_{r=1}^\Delta (-1)^r \Pr(C_{u, \Delta, r}|A_{u, \Delta}),
\]
thus completing the derivation of Equations~\ref{eq-offline-ns},~\ref{eq-offline-s}, and~\ref{eq-offline-s2}.

\section{Analysis on \GraphModel{} random graphs in asymptotic case}
\label{appendix-asymptotic}
In this section, we give slight modifications of the Equation~\ref{eq-diffeqsoln} and Equations~\ref{eq-offline-ns},~\ref{eq-offline-s},~\ref{eq-offline-s2} in the case where $n, m \rightarrow \infty$ to allow us to evaluate the equations to produce the results in Table~\ref{table-analysis-expcut}.
The model will change slightly when considering the asymptotic case: we will describe the set of offline expected degrees $\mathbf{d}$ by a set of unique degrees $\{\delta_i\}_{i=1}^\ell$ and corresponding fractions $\{\lambda_i\}_{i=1}^\ell$ where a $\lambda_i$ fraction of the offline nodes have expected degree $\delta_i$.

Importantly for the asymptotic results in Table~\ref{table-analysis-expcut}, while there are offline nodes with expected degree approaching infinity, a finite number of unique expected degrees account for all but an exponentially small fraction of the offline nodes, allowing us to evaluate the equations up to negligible error.
In the following calculations, we will consider both $\ell$ as well as all $\delta_i$ for $i =\{1,...,\ell\}$ to be finite.

\subsection{Asymptotic analysis of MinPredictedDegree}
To start, we will replace $f_d$ with $m \cdot \lambda_d$ and we will replace $t$ with $\tau = t/m$.
Recall $k_d = -\log(1 - d/m)$.
The Taylor expansion of $\log(1-x)$ at $x=0$ is $-\sum_{n=1}^{\infty} \frac{x^n}{n}$.
Within Equation~\ref{eq-diffeqsoln}, $k_d$ appears in terms $k_d/k_{d'}$ and $k_d \cdot f_d = k_d \cdot m \cdot \lambda_d$.
In the asymptotic case, we will use the following substitutions for those terms:
\begin{equation*}
    \lim_{m \rightarrow \infty} k_d/k_{d'} = d/d'.
\end{equation*}
and
\begin{equation*}
    \lim_{m \rightarrow \infty} k_d \cdot m \cdot \lambda_d = -d \cdot \lambda_d.
\end{equation*}
In both cases, we use the fact that as $m \rightarrow \infty$, the first term in the Taylor series $(d/m)$ dominates.

Using these substitutions, we can rewrite the equations for MinPredictedDegree as follows.
\begin{align*}
    \alpha_{\delta_2}(\tau) &= C_{\delta_1} + e^{\delta_1 \tau}\\
    \alpha_{\delta_i}(\tau) &= (\alpha_{\delta_{i-1}}(\tau))^{\delta_{i-1}/\delta_{i-2}} + C_{\delta_{i-1}} \text{ (for $i \geq 3$)} 
\end{align*}
where 
\begin{align*}
    C_{\delta_1} &= e^{\delta_1 \lambda_{\delta_1}} - 1 \\
    C_{\delta_i} &= (\alpha_{\delta_i}(0))^{\delta_i/\delta_{i-1}} (e^{\delta_i \lambda_{\delta_i}} - 1) \text{ (for $i \geq 2$)}.
\end{align*}
Then,
\begin{equation}
\label{eq-diffeqsoln-asymptotic}
\begin{split}
    z_{\delta_1}(\tau) &= -\log(C_{\delta_1} e^{-\delta_1 \tau} + 1) \\
    z_{\delta_i}(\tau) &= -\log(C_{\delta_i} (\alpha_{\delta_i}(\tau))^{-\delta_i/\delta_{i-1}} + 1) \text{ (for $i \geq 2$)}.
\end{split}
\end{equation}

Recall that the expected number of offline nodes with expected degree $d$ is given by $-z_d(\tau)/k_d$ evaluated when $t=m$.
Then, in the asymptotic case, the expected fraction of offline nodes matched is 
\begin{equation}
\label{eq-mpd-expectation-asymptotic}
    \sum_{i=1}^\ell \lambda_i + z_{\delta_i}(1)/\delta_i
\end{equation}
where $z_{\delta_i}(\tau)$ are given by Equation~\ref{eq-diffeqsoln-asymptotic}.

\subsection{Asymptotic analysis of maximum matching}
For the equations for the upper bound on the expected maximum matching size, the key fact we will use is $\lim_{x \rightarrow 0} (1+x) = e^x$.
Therefore, we can replace all terms of $(1 - \frac{d}{m})^{m-C}$ with $e^{-d}$.
In addition, we can replace all terms of $\binom{m}{C} \left(\frac{d}{m}\right)^C$ with $\frac{d^C}{C!}$.
These substitutions give the following equations.
\begin{equation*}
   \beta^0 = \beta^1 = 1 \\ 
\end{equation*}
\begin{equation*}
\begin{split}
   \beta^\Delta &= 1 - \sum_{r=1}^\Delta (-1)^r \binom{\Delta}{r} \prod_{i=1}^\ell \left[1 - r \left(\frac{\delta_i}{m}\right) e^{-\delta_i}\right]^{m*\lambda_i}\\
   &= 1 - \sum_{r=1}^\Delta (-1)^r \binom{\Delta}{r} \prod_{i=1}^\ell e^{-r \delta_i \lambda_i e^{-\delta_i}} (\text{ for } \Delta \geq 2).
\end{split}
\end{equation*}
Note that the $\beta^\Delta$ terms are no longer indexed by $i$ as conditioning on the actual degree of a single node makes no difference on the probability in the asymptotic case.
Then,
\begin{equation}
    \E\left[\frac{|N(S^*)|}{m}\right] = \left(1 -  \prod_{i=1}^\ell e^{-\delta_i \lambda_i e^{-\delta_i}} \right)
\end{equation}
and
\begin{equation}
    \E\left[\frac{|S^*|}{m}\right] \geq \sum_{\Delta=0}^C \E\left[\frac{|S^*_\Delta|}{m}\right] 
\end{equation}
where
\begin{equation}
\begin{split}
    \E\left[\frac{|S^*_\Delta|}{m}\right] &= \sum_{i=1}^\ell \lambda_i \frac{(\delta_i)^\Delta}{\Delta!} e^{-\delta_i} \beta^\Delta_i.
\end{split}
\end{equation}

% \color{blue}
\section{Analysis of MPD for Erd\H{o}s-Rényi Random Bipartite Graphs} \label{appendix-uniform}
We here analyze the performance of MPD on CLV-B instances, $I_{\textbf{p},\textbf{1}}$, where $\textbf{p}=(p,\dots,p) \in [0,1]^n$ and $\textbf{1}=(1,\dots,1) \in [0,1]^m$. Such an instance is an Erd\H{o}s-Rényi bipartite random graph where all edges appear with the same probability $p$. In particular, the expected degrees of the offline vertices are the same and equal to $d:=mp$. Note that in this case, MPD is equivalent to any other greedy algorithm. Letting $c=m/n$, we show that for a wide range of the parameters $m,n$, and $p$, the expected fraction of matched offline vertices is $1+c-\frac{c\ln(e^{d}+e^{d/c}-1)}{d}$ up to a small additive error.

Combining this bound, with the asymptotic upper bound on the maximum matching of Appendix~\ref{appendix-asymptotic}, we obtain that for these graphs, the asymptotic competitive ratio of MPD is at least 0.831 which is significantly better than the $0.7299$ bound from~\cite{brubach2016new}. 
We conjecture that Erd\H{o}s-Rényi random bipartite graphs are in fact worst-case instances for MPD, in the sense that a lower bound on the competitive ratio of MPD for Erd\H{o}s-Rényi random graphs, also holds for general CLV-B random graphs. 

%We will assume that $m=\Theta(n)$ and moreover, that $d=o(\log n)$. 
%If $d=\omega(1)$, it is easy to check that the size of the matching output by any greedy algorithm is $(1-o(1))\min(n,m)$ with high probability, and as a maximum matching has size at most $\min(n,m)$, the algorithm is $(1-o(1))$-competitive.
\begin{theorem}\label{thm:mpd_uniform}
Let $p\in [0,1]$ and $n,m\in \N$. Assume that\footnote{Here, $o(1)\to 0$ as $n\to \infty$} $n^{1-o(1)}\leq m\leq n^{1+o(1)}$, $p=o(\log n)/m$, and $p\geq 1/n^{1+o(1)}$. Let $\textbf{p}=(p,\dots,p)\in [0,1]^n$, and $\textbf{q}=(1,\dots,1)\in [0,1]^m$. Let $M$ be the size of the matching output by any greedy algorithm $\mathcal{A}$ on input $I_{\textbf{p},\textbf{q}}$. Let $c=m/n$. Then 
$$
\frac{\E[M]}{n}=1+c-\frac{c\ln(e^{d}+e^{d/c}-1)}{d}\pm n^{-1/2+o(1)}.
$$
Moreover, $|M-\E[M]|=O(\sqrt{m\log n})$ with high probability in $n$.
\end{theorem}
\begin{proof}
Let $\mathcal{F}_i$ denote the $\sigma$-algebra generated by the neighborhoods of the first $i$ arriving online nodes. Defining $X_i=\E[M\mid \mathcal{F}_i]$, we  have that $(X_i)_{i=0}^m$ is a martingale with $X_0=\E[M]$ and $X_n=M$. It is easy to check that $|X_i-X_{i-1}|\leq 1$ for $1\leq i \leq m$, so it follows from Azuma's inequality that for any $t>0$,
$$
\Pr[|M-\E[M]|\geq t]\leq \exp\left( \frac{-t^2}{2m} \right)
$$
In particular, $|M-\E[M]|= O(\sqrt{m\log n})$ with high probability, say at least $1-n^{-10}$ as claimed in the theorem. 
%Moreover, the assumptions that $m=\Theta(n)$ and $d=O(1)$ implies that $\E[M]=(1-\Omega(1))n$
Moreover, the assumption $pm=o(\log n)$ implies that $\E[M]\leq n(1-n^{-o(1)})$. This is clear when $m\leq n/2$. On the other hand when $m>n/2$, the probability that any given offline node is never picked by an online node is $(1-p)^m=e^{-o(\log n)}=n^{-o(1)}$, using the assumption $pm=o(\log n )$.

For $0\leq j\leq n-1$, we let $T_j$ denote the number of online vertices $v$ such that when $v$ arrives, the matching found by $\mathcal{A}$ so far has size $j$. Then $T_j$ is geometrically distributed with parameter $1-(1-p)^{n-j}$, and so, $\E[T_j]=\frac{1}{1-(1-p)^{n-j}}$. Moreover, $T_j= O(\frac{\log n}{1-(1-p)^{n-j}})$ with high probability.

Define $n^*=\lfloor\E[M]\rfloor$. Let $L_1=n(1-n^{-o(1)})$ be such that $\max(M,\E[M])\leq L_1$ with probability at least $1-n^{-10}$ and let $A_1$ be the event that $\max(M,\E[M])> L_1$. Note that when $j\leq L_1$, then 
$$\E[T_j]\leq \frac{1}{1-(1-p)^{n^{1-o(1)}}}\leq \frac{1}{1-\exp(-pn^{1-o(1)})}=\frac{1}{1-\exp(-n^{-o(1)})}=n^{o(1)}.
$$
We can therefore pick $L_2=n^{o(1)}$ such that $\max(T_1,\dots,T_{L_1})\leq L_2$ with probability at least $1-n^{-10}$ and we let $A_2$ be the event that $\max(T_1,\dots,T_{L_1})> L_2$. Finally, let $L_3=O(\sqrt{m\log n})$ be such that $|M-\E[M]|\leq L_3$ with probability at least $1-n^{-10}$ and let $A_3$ be the event that $|M-\E[M]|> L_3$

If neither $A_1$, $A_2$, or $A_3$ occur, which happens with probability at least $1-3n^{-10}$, then
$$
\left|m-\sum_{i\leq n^*}T_i \right|\leq \left|m-\sum_{i\leq M}T_i \right|+L_2L_3\leq L_2+L_2L_3=n^{1/2+o(1)}.
$$
From this, it particularly follows that 
\begin{align}\label{eq:concentration-juggle}
\left|m-\E\sum_{i\leq n^*}T_i \right|\leq n^{1/2+o(1)}.
\end{align}
Let $n_0$ be minimal such that $\sum_{j<n_0} \E[T_j]\geq m$. Since $T_i\geq 1$ for every $i$, it follows from~\eqref{eq:concentration-juggle} that $|n_0-n^*|\leq n^{1/2+o(1)}$, and in particular that $|n_0-\E[M]|\leq n^{1/2+o(1)}$. To finish the proof, it therefore suffices to show that $n_0/n$ satisfies the bound in the theorem. 

For this, we first note that
\begin{align*}
    m\leq \sum_{j<n_0}\E[T_j]
    =&\sum_{j<n_0}\frac{1}{1-(1-p)^{n-j}}
    \leq \sum_{j<n_0} \frac{1}{1-e^{-p(n-j)}}
    \leq \int_0^{n_0}\frac{1}{1-e^{-p(n-x)}} \, dx. \\
    =& n_0-\frac{1}{p}\left(\ln(1-e^{p(n_0-n)})-\ln(1-e^{-pn}) \right)
\end{align*}
The right hand side is an increasing function of $n_0$ vanishing at $0$ and turning to infinity as $n_0 \to n$. Solving for $n_0$ then gives that
\[
  n_0\geq n+m-\frac{\ln(e^{pm}+e^{pn}-1)}{p},
\]
so that
\[
    n_0/n\geq 1+m/n-\frac{\ln(e^{pm}+e^{pn}-1)}{pn}=1+c-\frac{c\ln(e^{d}+e^{d/c}-1)}{d}
\]
which yields the lower bound in the proof of the theorem upon dividing by $n$.

For the upper bound, we use the inequality $(1+\frac{x}{n})^n\geq e^x(1-\frac{x^2}{n})$ holding for $n\geq 1$ and $|x|\leq n$, from which it follows that 
$$
\E[T_j]\geq \frac{1}{1-e^{-p(n-j)}(1-(n-j)p^2)}
\geq \frac{1}{1-e^{-p(n-j)}a},
$$
where we have put $a=1-np^2$. Note that by the definition of $n_0$, $\sum_{j< n_0-1} \E[T_j]<m$, and so 
it follows similarly to above that
\begin{align*}
   m_0&:=m+\E[T_{n_0-1}+T_{n_0}]\geq \sum_{j\leq n_0} \E[T_j]
    \geq \int_0^{n_0}\frac{1}{1-e^{-p(n-x)}a} \, dx  \\
    &\geq n_0-\frac{1}{p}\left(\ln(1-ae^{p(n_0-n)})-\ln(1-ae^{-pn}) \right).
\end{align*}
The right hand side is again an increasing function of $n_0$, so solving for $n_0$ gives that 
\begin{align*}
n_0\leq & m_0+n-\frac{\ln(e^{pn}+ae^{pm}-a)}{p}=m_0+n-\frac{\ln(e^{pn}+e^{pm}-1-p^2n(e^{pm}-1))}{p}\\
=&m_0+n-\frac{\ln(e^{pn}+e^{pm}-1)}{p}+O(pne^{pm})=m+n-\frac{\ln(e^{pn}+e^{pm}-1)}{p}+2n^{o(1)}.
\end{align*}
The desired result follows after dividing by $n$.
\end{proof}
\paragraph{Competitive ratio of MPD with uniform expected degree sequence.}
We now demonstrate how to combine the bound of Theorem~\ref{thm:mpd_uniform} with the upper bounds on the maximum matching of Appendix~\ref{appendix-asymptotic}, to obtain the better bound of 0.831 on the asymptotic competitive ratio of MPD for Erd\H{o}s-Rényi bipartite random graph. We start with the following lemma that allows us to focus on the case where both $c=m/n$ and $d=pm$ are constants, $d,c=\Theta(1)$.
\begin{lemma}\label{lem:assume-constants}
There exists an $\eps>0$, so that if $(c,d)\in [\eps,1/\eps]^2$, and $n$ is sufficiently large, then the competitive ratio of MPD on instance $I_{\textbf{p},\textbf{q}}$ is at least $0.99$. Here, $\textbf{p}=(p,\dots,p)\in [0,1]^n$, and $\textbf{q}=(1,\dots,1)\in [0,1]^m$.
\end{lemma}
\begin{proof}[Proof (sketch)]
Let $S_1$ denote the set of non-isolated offline nodes, $S_2$ the set of non-isolated online nodes, $M^*$ the maximum matching, and $M$ the matching found by MPD. First of all, it is easy to check that there exists $\eps_1>0$, so that if  if $c<\eps_1$ and $n$ is large enough, then both $|1-\frac{\E[|M^*|]}{\E[|S_2|]}|<\frac{1}{1000}$ and $|1-\frac{\E[|M|]}{\E[|S_2|]}|<\frac{1}{1000}$. The first bound follows from observing that for $c$ small, nearly every non-isolated online node can be matched: Even if, we just consider a single random edge leaving each of the non-isolated online nodes, the number of pair of such edges that are both incident to the same offline node is $O(|S_2|^2/n)=O(c|S_2|)$, and thus the matching has size at least $|S_2|(1-O(c))$. The second bound follows by observing that for each arriving non-isolated online node, MPD will match it with probability at least $1-m/n=1-c$. 

Similarly to above, there exists $\eps_2>0$, such that if $c>1/\eps_2$, and $n$ is large enough, then $|1-\frac{\E[|M^*|]}{\E[|S_1|]}|<\frac{1}{1000}$. Moreover, we can choose $\eps_2$ such that if $c>1/\eps_2$, and $n$ is large enough, it also holds that $|1-\frac{\E[|M|]}{\E[|S_1|]}|<\frac{1}{1000}$. Indeed, if $d\leq 10$, and $\eps_2$ is small enough, then an $1-10/\eps_2$ fraction of the nodes in $S_1$, will have a degree one neighbor, and therefore be matched, and the case $d\geq 10$ reduces to the case $d=10$.

Combining the above bounds it follows from some calculations that if we choose $\eps_3=min(\eps_1,\eps_2)$, then if $c\notin [\eps_3,1/\eps_3]$, and $n$ is large enough, the competitive ratio of MPD is at least $0.99$. 

We next assume that $c\in [\eps_3,1/\eps_3]$. Using this assumption on $c$, we can then choose $\eps_4>0$ such that if $d<\eps_4$, then $99\%$ of the edges $(u,v)$ of the instance, will satisfy that the vertices $u$ and $v$ have degree $1$. As MPD includes all those edges, and the total number of edges of the graph is an upper bound on $|M^*|$, it follows that in this case, the competitive ratio is at least $0.99$. Finally, it is easy to check by splitting into the cases $m\leq n$ and $m>n$ that we can choose $\eps_5>0$, such that if $d\geq 1/\eps_5$, then $|1-\frac{\E[|M|]}{\min(n,m)}|<\frac{1}{1000}$. As $|M^*|\leq \min(m,n)$, this gives that the competitive ratio in this case is at least $0.99$. Setting $\eps=\min(\eps_3,\eps_4,\eps_5)$, we obtain the desired result. 
\end{proof}

Next, we show how to use the techniques of Appendix~\ref{appendix-asymptotic} to obtain an upper bound on the maximum matching size in the case of Erd\H{o}s-Rényi random bipartite graphs in the case that $d,c=\Theta(1)$.

\begin{lemma}\label{lem:uni-upper-bound}
Let $c,d$ be given with $c,d=\Theta(1)$. Define
\[
A(c,d)=1-e^{-cde^{-d}}, \quad \text{and} \quad B(c,d)=e^{-d}(e^{-dA(c,d)}+d(1-A(c,d)))
\]
For instances $I_{\textbf{p},\textbf{q}}$, where $\textbf{p}=(p,\dots,p)\in [0,1]^n$, and $\textbf{q}=(1,\dots,1)\in [0,1]^m$, the expected fraction of matched offline vertices is at least $(1\pm o(1)) C(c,d)$ as $n\to \infty$, where $C(c,d)=1+cA(c,d)-B(c,d)$.
\end{lemma}
\begin{proof}
Let $N(S)$ denote the set of nodes of $V$ with at least one degree one neighbor in $U$, and let $S$ denote the nodes of $U$ whose neighborhood is fully contained in $N(S)$. The suggestive notation is justified as $N(S)$ is indeed the neighborhood of $S$. As we saw in the Appendix~\ref{appendix-asymptotic}, for $v\in V$,
\[
\Pr[v\in N(S)]=1-\left(1-\frac{d}{n}\left(1-\frac{d}{n}\right)^{m-1}\right)^n=(1\pm o(1))\left(1-e^{-cde^{-d}}\right)=(1\pm o(1))A(c,d),
\]
where we have used the approximation $e^x$ for $1+x$ which is sharp enough to get the bound above, since $d=\Theta(1)$ and $m=\Theta(n)$ with the assumptions on $c$ and $d$. Next for a fixed node $u\in U$, we let $A_k$ denote the event that $u$ has degree $k$. Then for any $\Delta$, 
\[
\Pr[u\in S ]\geq \sum_{k=0}^\Delta \Pr[u\in S \cap A_k].
\]
Assuming that $\Delta=O(1)$ and $2\leq k\leq \Delta$, we can approximate
\[
\Pr[u\in S \cap A_k]=(1\pm o(1))\binom{m}{k}\left(\frac{d}{m}\right)^k\left(1-\frac{d}{m}\right)^{m-k}A(c,d)^k=(1\pm o(1))e^{-d}\frac{(A(c,d)d)^k}{k!}.
\]
Moreover, $\Pr[u\in S \cap A_0]=\Pr[A_0]$ and $\Pr[u\in S \cap A_1]=\Pr[A_1]$, so we can bound 
\[
\Pr[u\in S \cap A_0]=(1\pm o(1))e^{-d} \quad \text{and} \quad \Pr[u\in S \cap A_1]=(1\pm o(1))de^{-d}.
\]
Now for any constant $\Delta$, we can upper bound the expected fraction of matched offline vertices by  
\begin{align*}
&1-\Pr[u\in S]+c\Pr[v\in N(s)]\leq 1-\left( \sum_{k=0}^\Delta \Pr[u\in S \cap A_k]\right)+c\Pr[v\in N(s)] \\
=&1-(1\pm o(1))e^{-d}\left(1+d+\sum_{k=2}^\Delta\frac{(A(c,d)d)^k}{k!}\right)+(1\pm o(1))cA(c,d)
\end{align*}
Since this bound holds for any constant $\Delta$, and the series converges, it also holds in the limit. Thus, we can conclude that as $n\to \infty$, the expected fraction of matched offline nodes is at most
\[
(1\pm o(1))\left(1-e^{-d}(e^{A(c,d)d}+d(1-A(c,d)))+cA(c,d)\right)=(1\pm o(1))(1-B(c,d)+cA(c,d))=(1\pm o(1))C(c,d),
\]
as desired. 
\end{proof}

\paragraph{Remark} When bounding the size of the maximum matching, with Lemma~\ref{lem:uni-upper-bound}, we can switch the roles of $m$ and $n$. Letting $d'=pn$ and $c'=n/m$, it holds that $c'=1/c$ and $d'=d/c$, and we can use Lemma~\ref{lem:uni-upper-bound}, to conclude that the expected fraction of matched \emph{online} nodes is at most $(1\pm o(1))C(1/c,d/c)$ as $n\to \infty$. In particular, we can upper bound the expected fraction of \emph{offline} nodes in a maximum matching by 
$(1\pm o(1))D(c,d)$, where 
\[
D(c,d)=\min(C(c,d),c C(1/c,d/c))
\]

We next show that for Erd\H{o}s-Rényi bipartite random graph, the asymptotic competitive ratio of MPD is at least $0.831$. 
\begin{lemma}
Let $I_{\textbf{p},\textbf{q}}$, where $\textbf{p}=(p,\dots,p)\in [0,1]^n$, and $\textbf{q}=(1,\dots,1)\in [0,1]^m$ be an Erd\H{o}s-Rényi bipartite random graph. The asymptotic competitive ratio of MPD on $I_{\textbf{p},\textbf{q}}$ as $n\to \infty$ is at least $0.831$.
\end{lemma}
\begin{proof}
By Lemma~\ref{lem:assume-constants}, we can assume that $(c,d)\in [\eps,1/\eps]^2$ for some small enough constant, as otherwise, the competitive ratio of MPD is at least 0.99. Then we are in position to apply Theorem~\ref{thm:mpd_uniform} and Lemma~\ref{lem:uni-upper-bound} (in fact the remark following the lemma). Letting $E(c,d)=1+c-\frac{c\ln(e^{d}+e^{d/c}-1)}{d}$, we conclude that the asymptotic competitive ratio of MPD is at most $\frac{E(c,d)}{D(c,d)}$. Using some calculus, it can be checked that in any region $[\eps,1/\eps]^2$, where $\eps$ is sufficiently large, $\frac{E(c,d)}{D(c,d)}$ has a unique minimum attained at $(c_0,d_0)$, where $c_0=1$ and $d_0\approx 2.7997$. Moreover,  $\frac{E(c_0,d_0)}{D(c_0,d_0)}\approx 0.83105\geq 0.831$. 
\end{proof}
\color{black}

\section{Concentration of MinPredictedDegree}
\label{appendix-mpd-concentration}
In this section, we prove that MinPredictedDegree's performance on \GraphModel{} random graphs is concentrated about its expectation.
\begin{theorem}
\label{thm-mpd-concentration}
Let $G$ be a symmetric \GraphModel{} random graph with expected offline degrees $\mathbf{d}$ and let $X$ be the random variable corresponding to the size of the matching returned by MPD.
Then,
\begin{equation}
\label{eq-mpd-concentration}
    \Pr(|X - \E[X]| \geq 2 \sqrt{m} \log m) \leq \frac{2}{m}.
\end{equation}
\end{theorem}

\begin{proof}
Let $H_j$ represent the state of MinPredictedDegree after it has processed the $j$th online node, and let $Y_j = \E[X | H_j]$ be the expectation of the size of the returned matching conditioned on the history of the algorithm up to time $j$.
Then $\{Y_j\}_{j=0}^m$ form a Doob martingale.
We will proceed by bounding $|Y_j - Y_{j-1}|$.

If an offline node $i$ was matched with online node $j$, then the conditional expectation of the final matching size increases by $1 - \Pr(i \text{ matched} | H_{j-1})$.
For each unmatched offline node that is not matched with online node $j$, the conditional expectation decreases by the sum over all such nodes $i'$ of $\Pr(i' \text{ matched with } j |H_{j-1})$.
As both the increment and decrement are bounded in magnitude by $1$, the martingale has bounded differences $|Y_j - Y_{j-1}| \leq 1$.

Applying the standard Azuma's inequality bounds, we get the concentration result:
\begin{align*}
    \Pr(|Y_m - Y_0| \geq 2 \sqrt{m} \log m) \leq \frac{2}{m}.
\end{align*}
As $Y_0 = \E[X]$ and $Y_m = X$, this completes the proof.
\end{proof}

\section{Concentration of the Upper Bound on Maximum Matching Size}
\label{appendix-offline-concentration}
In this section, we show that our upper bound on the size of a maximum matching in \GraphModel{} random graphs with power law distributed degrees is concentrated about its expectation.

\begin{theorem}
\label{thm-offline-concentration}
Let G be a \GraphModel{} random graph with $n=m$ and with expected offline degrees following a power law distribution with exponent $\alpha > 3$, and let $X$ be the random variable corresponding to the difference $|S^*| - |N(S^*)|$ where $S^*$ and $N(S^*)$ are the subsets of the nodes in $G$ described in Section~\ref{appendix-offline-expectation}.
Then, there exists some constant $C$ s.t. 
\begin{equation}
\label{eq-offline-concentration-thm}
    \Pr(|X - \E[X]| \geq C \sqrt{n} \log n) \leq \frac{1}{n}.
\end{equation}
\end{theorem}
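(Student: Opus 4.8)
\emph{Proof proposal.} By Equation~\ref{eq-offline-upper-bound}, applied to $S=S^*$, the quantity $n-X$ with $X=|S^*|-|N(S^*)|$ is an upper bound on $\mu(G)$, so it suffices to prove that $X$ concentrates. I would write $X$ as a function of the (mutually independent) edge indicators and control it with a martingale argument in which the edges are revealed \emph{one offline node at a time}: let $\mathcal F_j$ be the $\sigma$-algebra generated by the edges incident to $u_1,\dots,u_j$ and set $Y_j=\E[X\mid\mathcal F_j]$, so that $\{Y_j\}_{j=0}^n$ is a Doob martingale with $Y_0=\E[X]$ and $Y_n=X$. Exposing \emph{offline} vertices (rather than online ones) is the point: the edges at $u_j$ affect the degree of no other offline vertex, so the degree-one set $U_1$ can change only in whether $u_j$ itself belongs to it; exposing an online vertex instead changes the degrees of many offline vertices at once and produces a two-level cascade whose increments are quadratically larger.

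\textbf{Increment bound.} Rewiring $u_j$ changes $U_1$ only through $u_j$, hence changes $N(U_1)=N(S^*)$ by at most two online vertices: it can remove the old unique neighbour $w'$ of $u_j$ (if no other degree-one vertex points there) and add the new unique neighbour $w$ of $u_j$. Crucially, $S^*=\{u\in U:N(u)\subseteq N(S^*)\}$ is a \emph{non-recursive} function of the graph together with the set $N(S^*)$, so there is no further propagation: the only offline vertices whose $S^*$-membership can flip are $u_j$ and those incident to $w$ or $w'$. Carrying this through the Doob averaging shows that $|Y_j-Y_{j-1}|$ is bounded by an absolute constant $c_0$ in general, and is larger — bounded by $c_0+\deg(w_j)$, where $\deg(\cdot)$ is the online-side degree and $w_j$ is the unique neighbour of $u_j$ — only on the rare event that $u_j$ itself has degree exactly one (the contribution coming from resampled configurations in which $u_j$ has degree one is, after averaging, only $O(1)$ because its probability is $\Theta(d_je^{-d_j})$ and the expected degree of an online vertex is $\bar d=\sum_i d_i/m=\Theta(1)$).

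\textbf{Concentration.} This gives the two ingredients that a Bernstein-type (Freedman) martingale inequality needs. First, an almost-sure bound: since $\bar d=\Theta(1)$ (the power-law mean is finite for $\alpha>2$), a Chernoff bound and a union bound over the $m$ online vertices show that with probability at least $1-n^{-2}$ every online degree is at most $T=O(\log n/\log\log n)$; on this event $\mathcal G$ all increments are $O(\log n)$, so after the routine truncation of the martingale at the first failure of $\mathcal G$ we may take $c_{\max}=O(\log n)$. Second, a conditional-variance bound: using $|Y_j-Y_{j-1}|\le c_0+\deg(w_j)\,\mathbf 1[\deg(u_j)=1]$ and that, given $\mathcal F_{j-1}$, the event $\deg(u_j)=1$ has probability $\Theta(d_je^{-d_j})$ while the squared degree of a unique neighbour of $u_j$ has conditional expectation $O(1)$ on average over the (roughly uniform) choice of neighbour, one gets
\begin{equation*}
V:=\sum_{j=1}^n\E\big[(Y_j-Y_{j-1})^2\mid\mathcal F_{j-1}\big]=O\!\Big(n+\textstyle\sum_i d_ie^{-d_i}\Big)=O(n),
\end{equation*}
where the light power-law tail (finite second moment of the degree distribution, guaranteed by $\alpha>3$, so that $\sum_i d_i^2=O(n)$ and the auxiliary sums of revealed online degrees stay $O(m)$) provides the headroom. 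Freedman's inequality then yields, for a suitable constant $C$,
\begin{align*}
\Pr\!\big(|X-\E X|\ge C\sqrt n\,\log n\big)
 &\le 2\exp\!\Big(-\frac{(C\sqrt n\log n)^2}{2\big(V+\tfrac13 c_{\max}C\sqrt n\log n\big)}\Big)+\Pr(\mathcal G^c)\\
 &\le \frac1n,
\end{align*}
since for large $n$ the $V=O(n)$ term dominates the denominator and $\Pr(\mathcal G^c)\le n^{-2}$; this is exactly Equation~\ref{eq-offline-concentration-thm}.

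\textbf{Main obstacle.} The difficulty is not the outline but the bookkeeping behind the increment and variance bounds. One must argue rigorously that recomputing $S^*$ after $N(S^*)$ changes does not cascade past a single layer — this is precisely the non-recursiveness of $S^*$ and deserves a clean deterministic lemma — and, more delicately, one must bound the conditional second moments $\E[(Y_j-Y_{j-1})^2\mid\mathcal F_{j-1}]$ even though conditioning on the already-exposed offline vertices $u_1,\dots,u_{j-1}$ can leave some online vertex with a large revealed degree; controlling the average of these revealed degrees over the random neighbour of $u_j$ and summing over $j$ is the step that actually consumes the strong power-law hypothesis $\alpha>3$. Finally, a plain Azuma argument with only the $O(\log n)$ increment bound would give a deviation of order $\sqrt n\,\log^{3/2}n$, so the Bernstein/Freedman refinement — hence the conditional-variance estimate — is what is needed to reach the stated $\sqrt n\,\log n$ bound. (This theorem, together with Theorem~\ref{thm-mpd-concentration}, is what upgrades the ratio-of-expectations in Section~\ref{sec-analysis} to a statement about the competitive ratio.)
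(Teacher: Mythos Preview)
Your approach is workable but takes a genuinely different route from the paper's, and the difference is instructive.

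\textbf{The paper's trick.} The paper also runs a Doob martingale over the offline vertices, but it exposes them \emph{in order of their realized degree}, smallest first. The point of this ordering is that once all vertices of actual degree $\le 1$ have been revealed, the set $N(S^*)=N(U_1)$ is completely determined. Hence for every later step (where $\deg(u_t)>1$) the only thing that can change is whether $u_t$ itself lies in $S^*$, giving an increment bounded by $1$. For the early steps (where $\deg(u_t)\le 1$), the paper bounds the indirect effect on the conditional probabilities $\Pr(u_{t'}\in S^*)$ for later $t'$ by the explicit formula (their Equation~\eqref{eq-offline-concentration1}), obtaining a telescoping bound of $\deg(u_{t'})/m$ per future vertex, which sums to $\sum_u\deg(u)/m=O(1)$ with high probability under $\alpha>3$. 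Thus every increment is $O(1)$ w.h.p., and plain Azuma gives the $\sqrt n\log n$ deviation directly---no Freedman, no conditional-variance bookkeeping.

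\textbf{What you do instead.} You expose in a fixed order, so $N(S^*)$ is not known until the end; a single degree-one exposure can flip the $S^*$-membership of every offline neighbour of the affected online vertex $w_j$, giving increments of order $\deg(w_j)$. You then recover the target deviation by combining an $O(\log n)$ high-probability cap on all online degrees with a Freedman variance bound. This is a legitimate route, and your diagnosis that plain Azuma would only yield $\sqrt n\,\log^{3/2}n$ is correct. Two caveats: (i) the quantity bounding $|Y_j-Y_{j-1}|$ is $\mathcal F_j$-measurable, so it should be the \emph{revealed} degree of $w_j$ (plus an $O(1)$ tail expectation), not the full $\deg(w_j)$; your write-up blurs this. (ii) Your claim $V=O(n)$ would need $\frac1m\sum_v\deg(v)^2=O(1)$ w.h.p., which you do not establish; fortunately the easy bound $V=O(n\log n)$ (immediate on your good event $\mathcal G$) already suffices in Freedman, since the $V$ term still dominates the denominator.

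\textbf{Trade-off.} The paper's degree-ordered filtration is the cleaner argument: it sidesteps the entire variance analysis, which you correctly identify as the delicate part of your plan. Your approach is more generic (it does not rely on the special structure that $N(S^*)$ depends only on the degree-one vertices) but pays for that generality with the Freedman machinery and the conditional-second-moment estimate you flag as the main obstacle.
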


\begin{proof}[Proof of Theorem~\ref{thm-offline-concentration}]
Let $H_t$ represent the $t$ offline nodes with the smallest degrees (ties broken arbitrarily) as well as their incident edges and let $Y_t =\E[\left(|S^*| - |N(S^*)|\right) |  H_t]$ be the expected difference in the sizes of $S^*$ and $N(S^*)$ given knowledge of $H_t$.
Note that here we are using true degrees and \emph{not} expected degrees.
Then, $\{Y_t\}_{t=0}^n$ form a Doob martingale.

Let $u_t$ be the offline node with the $t$th smallest degree and let $deg(u_t)$ be the its degree.
We will proceed by cases to show that the martingale has bounded differences.
\begin{enumerate}[leftmargin=*]
    \item Assume $deg(u_t) > 1$.
    Then, from $H_{t-1}$ we know $N(S^*)$ as all degree one nodes have already been seen.
    Therefore, the contribution of $u_t$ to the difference $\left(|S^*| - |N(S^*)|\right)$ is independent of any subsequent offline nodes.
    Specifically, if $u_t \in S^*$, $Y_t - Y_{t-1} = 1 - \Pr(u_t \in S^* | H_{t-1})$, and if $u_t \notin S^*$, $Y_t - Y_{t-1} = -\Pr(u_t \in S^* | H_{t-1})$ where
    \begin{equation}
    \label{eq-offline-concentration1}
    \begin{split}
        \Pr(&u_t \in S^* | H_{t-1}) = 
        \E_{deg(u_t) | H_{t-1}}\left[\prod_{k=0}^{deg(u_t)-1} \left(\frac{N(S^*)-k}{m}\right)\right].
    \end{split}
    \end{equation}
    In any case, $|Y_t - Y_{t-1}| \leq 1$.
    \item Assume $deg(u_t) \leq 1$.
    In this case, we have to deal with the fact that $u_t$ can affect $N(S^*)$ as well as $S^*$.
    Part of the difference $Y_t - Y_{t-1}$ is due to the inclusion of $u_t$ in $S^*$ and the subsequent possibility that $u_t$ contributes a node to $N(S^*)$ if $deg(u_t)=1$ and its neighbor is not already in $N(S^*)$.
    This part of the difference is bounded in magnitude by one as these events change the difference $S^* - N(S^*)$ by at most one.
    
    The other part of the difference $Y_t - Y_{t-1}$ is due to whether $u_t$ increments the size of $N(S^*)$ via its neighbor, affecting the probabilities $\Pr(u_{t'} \in S^*)$ for $t'$ where $deg(u_{t'}) > 1$ as in Equation~\ref{eq-offline-concentration1}.
    As the expected size of $N(S^*)$ can change by at most one, the change in $\Pr(u_{t'} \in S^*)$ for each $t'$ where $deg(u_{t'}) > 1$ is at most
    \begin{equation}
    \label{eq-offline-concentration2}
        \prod_{k=0}^{deg(u_{t'})-1} \left(\frac{t'-k}{m}\right) - \prod_{k=0}^{deg(u_{t'})-1} \left(\frac{t'-k-1}{m}\right).
    \end{equation}
    The factors of $t'$ in the numerators come from the fact that $N(S^*) \leq t'$ if $deg(u_{t'}) > 1$.
    As both parts of the difference contain many of the same terms, we can simplify Expression~\ref{eq-offline-concentration2} as
    \begin{equation}
    \label{eq-offline-concentration3}
        \frac{deg(u_{t'})}{m} \prod_{k=1}^{deg(u_{t'})-1} \left(\frac{t'-k}{m}\right) \leq \frac{deg(u_{t'})}{m}.
    \end{equation}
    As we assume that the expected degrees are distributed according to a power law distribution with exponent $\alpha > 3$, the expectation and variance of the degree of a given node $u$ will be constant. Thus, with high probability, the sum over all offline degrees $\sum_u deg(u) = O(m)$. 
    The contribution to $Y_t -Y_{t-1}$ by Expression~\ref{eq-offline-concentration3} is thus bounded by $\sum_u \frac{deg(u)}{m} = O(1)$.    
    Overall, with high probability, $|Y_t - Y_{t-1}| = O(1)$.
\end{enumerate}
As in both cases, the martingale has constant bounded differences (with high probability), Azuma's inequality directly gives us the theorem.
\end{proof}

\section{Additional Experiments}
\label{appendix-experiments}

\begin{figure}[ht]
\begin{center}
\centerline{\includegraphics[width=0.5\columnwidth]{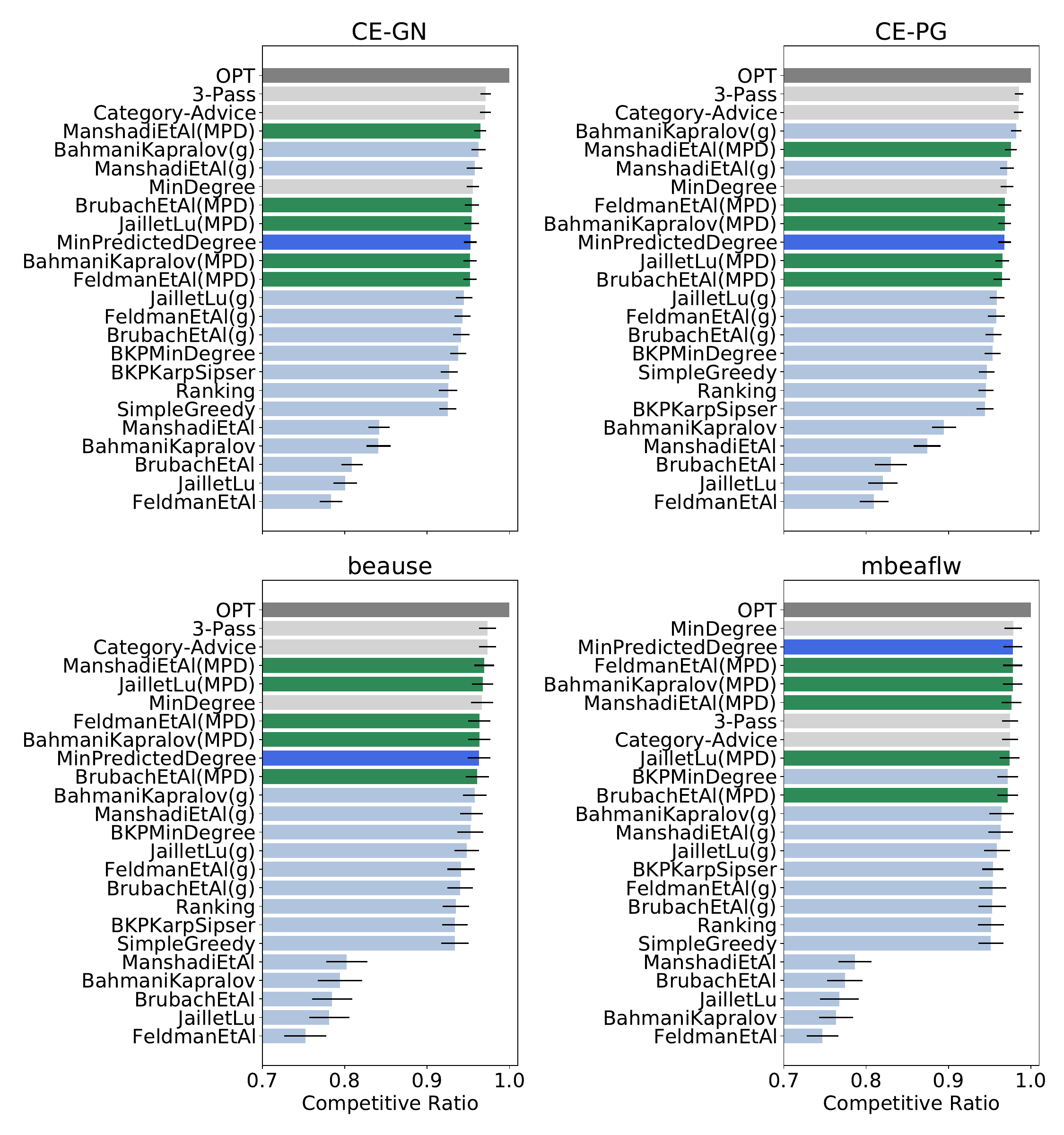}}
\caption{Additional comparison of empirical competitive ratios on Real World graphs.
Algorithms depicted in gray are \emph{not} online algorithms (they use extra information or multiple passes). Algorithms in green are augmented with MPD.}
\label{fig-realworld-appendix}
\end{center}
\end{figure}

Figure~\ref{fig-realworld-appendix} shows additional experiments on Real World graphs from the known i.i.d.\ model (based on the methodology of~\cite{borodin2020experimental}). Overall, the results are very similar to those in Section~\ref{sec-experiments}, MinPredictedDegree does very well compared to the other online baselines (depicted in light blue) despite making relatively little use of the type graph information. Additionally, augmenting the known i.i.d.\ baselines with MinPredictedDegree (e.g.\ using the MinPredictedDegree rule when the base algorithm does not match the current online node even though it has unmatched neighbors) often improves the performance over the baseline algorithm and the greedy augmentation.

\begin{figure}[ht]
\begin{center}
\begin{subfigure}[b]{0.5\columnwidth}
    \centerline{\includegraphics[width=\textwidth]{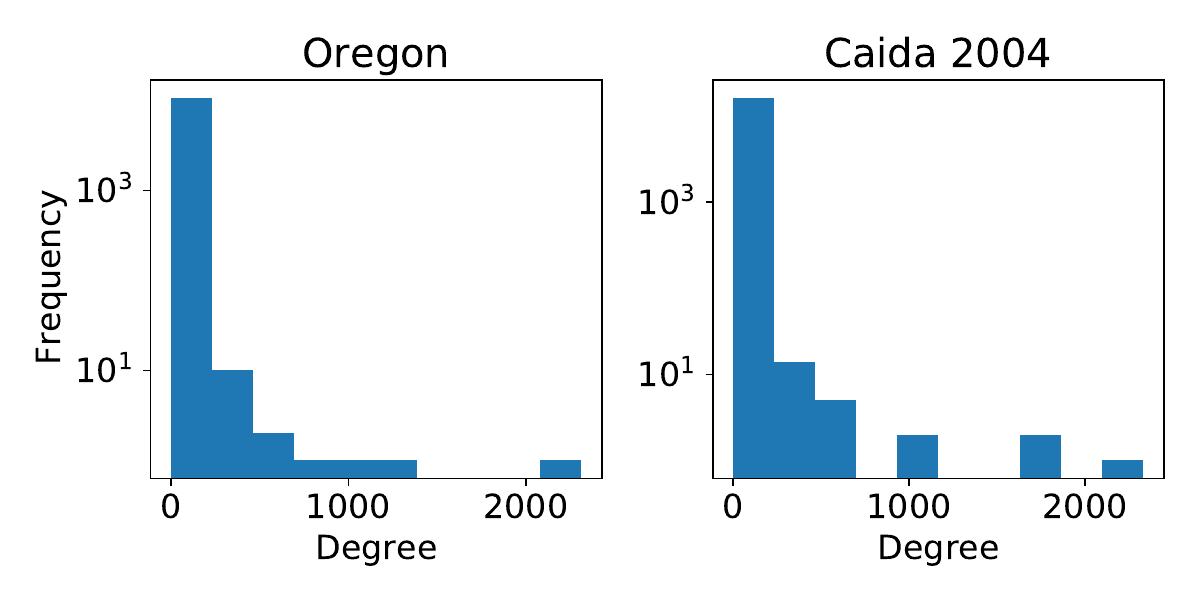}}
\end{subfigure}

\begin{subfigure}[b]{0.35\textwidth}
    \centerline{\includegraphics[width=\textwidth]{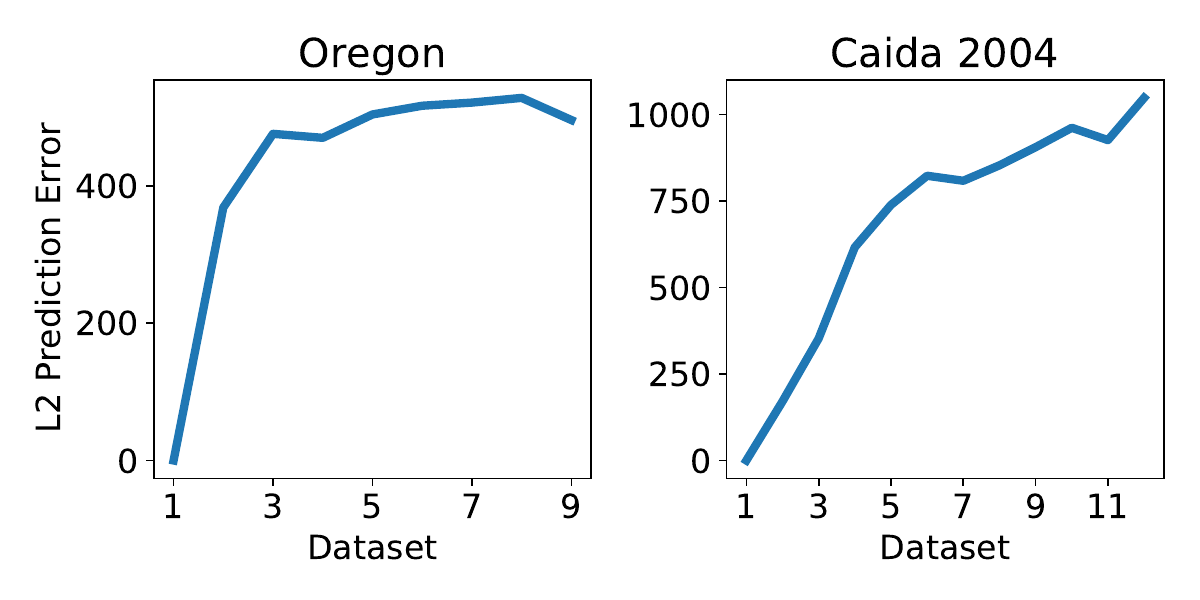}}
\end{subfigure}
\caption{Degree distribution (left) and $\ell_2$ prediction error over time (right) for the Oregon and Caida 2004 datasets.}
\label{fig-oracle-analysis}
\end{center}
\end{figure}

Figure~\ref{fig-oracle-analysis} shows the degree distribution of the Oregon and Caida 2004 datasets as well as the $\ell_2$ prediction error (square root of the sum of the squared error of the degree prediction for each offline node in the current graph) over time of using the first days degrees as a prediction for future degrees. As the prediction quality degrades, the performance of MinPredictedDegree slowly declines.

\end{document}